\title{Transitions in Dynamic Point Labeling} %
\author{Thomas Depian}{Algorithms and Complexity Group, TU Wien, Vienna, Austria}{tdepian@ac.tuwien.ac.at}{https://orcid.org/0009-0003-7498-6271}{Funded by the Vienna Science and Technology Fund (WWTF) under grant ICT22-029.}%
\author{Guangping Li}{Algorithm Engineering Group, TU Dortmund, Dortmund, Germany}{guangping.li@tu-dortmund.de}{https://orcid.org/0000-0002-7966-076X}{Funded by the Austrian Science Fund (FWF) under grant P31119.}
\author{Martin N\"ollenburg}{Algorithms and Complexity Group, TU Wien, Vienna, Austria}{noellenburg@ac.tuwien.ac.at}{https://orcid.org/0000-0003-0454-3937}{}
\author{Jules~Wulms}{Algorithms cluster, TU Eindhoven, the Netherlands}{j.j.h.m.wulms@tue.nl}{https://orcid.org/0000-0002-9314-8260}{Funded partially by the Austrian Science Fund (FWF) under grant P31119 and partially by the Vienna Science and Technology Fund (WWTF) under grant ICT19-035.}
\authorrunning{T.\,Depian, G.\,Li, M.\,N\"ollenburg, J.\,Wulms} %
\keywords{Dynamic labels, Label overlaps, Morphs, NP-completeness, Case study} %
\newcommand{\Labeling}[1]{\ensuremath{\mathcal{L}_{#1}}}
\newcommand{\GeneralLabeling}{\ensuremath{\mathcal{L}}}
\newcommand{\PlanarMonotoneMaxTwoSAT}{\textsc{Planar Monotone Max 2-Sat}\xspace}
\newcommand{\PlanarMaxTwoSAT}{\textsc{Planar Max 2-Sat}\xspace}
\newcommand{\PlanarThreeSAT}{\textsc{Planar 3-Sat}\xspace}
\newcommand{\NP}{\ensuremath{\textsf{NP}}}
\newcommand{\NPHard}{\mbox{\ensuremath{\textsf{NP}}-hard}\xspace}
\newcommand{\NPComplete}{\mbox{\ensuremath{\textsf{NP}}-complete}\xspace}
\newcommand{\TransitionStyle}[1]{\ensuremath{\mathit{#1}}}
\newcommand{\Transition}[1]{\ensuremath{\Labeling{1} \xrightarrow{\TransitionStyle{#1}} \Labeling{2}}}
\newcommand{\WeightedTransition}[2]{\ensuremath{\Labeling{1} \xrightarrow[#2]{\TransitionStyle{#1}} \Labeling{2}}}
\newcommand{\GoalOverlapas}{\ensuremath{\mathcal{G}_1}}
\newcommand{\GoalDuration}{\ensuremath{\mathcal{G}_2}}
\newcommand{\DirectedEdge}[2]{\ensuremath{#1 \to #2}}
\newcommand*\circled[1]{\tikz[baseline=(char.base)]{\node[shape=circle,draw,fill=white,font=\bfseries,inner sep=.75pt] (char) {\small #1};}}
\newtheorem{problem}{Problem}
\definecolor{labelcolorgreen}{rgb}{0.518 0.824 0}
\definecolor{labelcolorpurple}{rgb}{0.678 0.125 0.678}
\definecolor{labelcoloryellow}{rgb}{1 0.867 0}
\definecolor{labelcolorlime}{rgb}{0.808 0.875 0}
\newcommand\ProofLabelColor[1]{\textcolor{white!50!#1}{\rule{9 pt}{9 pt}}}
\newcommand{\DatasetAccidents}{\texttt{Accidents}\xspace}
\newcommand{\DatasetTwitter}{\texttt{Twitter}\xspace}
\newcommand{\DatasetWeather}{\texttt{Weather}\xspace}
\begin{document}

\maketitle

\begin{abstract}
The labeling of point features on a map is a well-studied topic. In a static setting, the goal is to find a non-overlapping label placement for (a subset of) point features. In a dynamic setting, the set of point features and their corresponding labels change, and the labeling has to adapt to such changes. To aid the user in tracking these changes, we can use morphs, here called \emph{transitions}, to indicate how a labeling changes. Such transitions have not gained much attention yet, and we investigate different types of transitions for labelings of points, most notably \emph{consecutive} transitions and \emph{simultaneous} transitions. 
We give (tight) upper bounds on the number of overlaps that can occur during these transitions.
When each label has a non-negative weight associated to it, and each overlap imposes a penalty proportional to the weight of the overlapping labels, we show that it is \NPComplete to decide whether the penalty during a simultaneous transition has weight at most~$k$.
Finally, %
we consider geotagged data on a map, by labeling points with rectangular or square labels. We developed a prototype implementation to evaluate different transition styles in practice, measuring both number of overlaps and transition duration.
\end{abstract}

\section{Introduction} %
Maps are ubiquitous in the modern world: from geographic to political maps, and from detailed road networks to schematized metro maps, maps are used on a daily basis. Advances in technology allow us to use digital maps on-the-fly and in a highly interactive fashion, by means of panning, zooming, and searching for map features. Besides changes induced by the user, maps can also change passively, for example automated panning during GPS routing, or changing points of interest when visualizing time-varying geospatial (point) data.

Important features on a map are often labeled. Examples of such features are areas (e.g., countries and mountain ranges), curves (e.g., roads and rivers), and most importantly points of interest (e.g., depending on the scale of the map, cities, landmarks, shops etc.).
The aforementioned interactions force map features and their corresponding labels to change, by appearing, disappearing, or changing position. Instead of swapping between the map before and after such changes, we can use morphs, here called \emph{transitions}, to allow the user to more easily follow changes in map features and labelings.

\begin{figure}
\centering
\includegraphics[width=\textwidth,page=2]{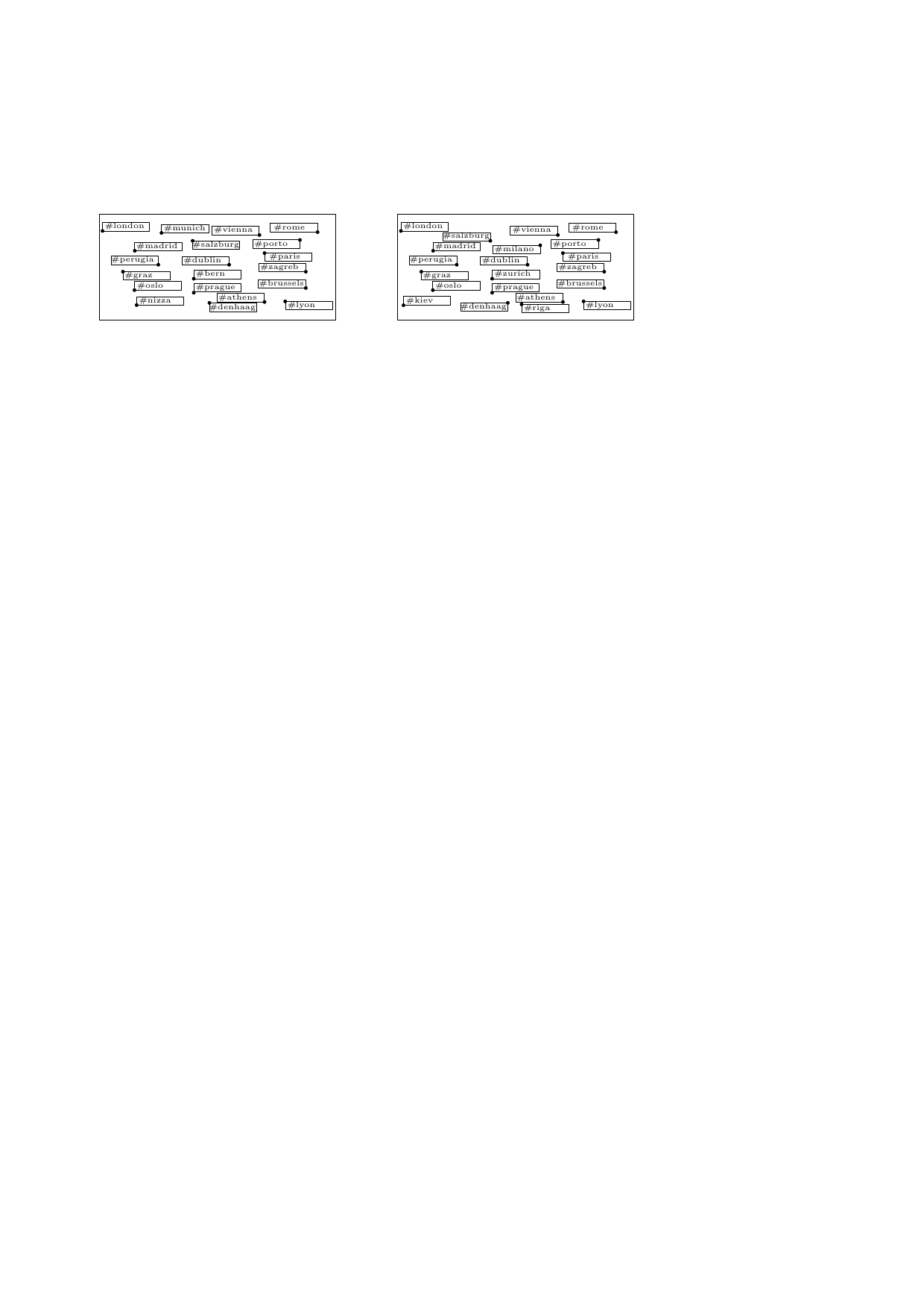}
\caption{Two labelings of the weather situation in Vienna at different timestamps: 23 labels appear, 26 disappear, and 11 change their position. The labelings are computed by our prototype that we describe in \Cref{sec:case-study}. Label-Icons: \textcopyright\ GeoSphere Austria~\cite{GeoSphereAustria2024}. Note that a full visual scan of the individual labels is necessary to identify all changes~\cite{Rensink.1997}.}
\label{fig:instant_change}
\end{figure}

\cref{fig:instant_change} shows why such transitions are important: even for two very similar point labelings, a lot of mental effort can be required to identify the differences.
This phenomenon has already been addressed in the literature:
User studies have shown that humans have difficulties in identifying the changes between two images, in particular if they occur in regions of the image that users deem irrelevant~\cite{Rensink.1997}.
To increase the perception of changes, Rensink, O'Regan, and Clark~\cite{Rensink.1997} suggest that the attention of a user should be drawn to the objects that change.
We see animated transitions as a way of interpolating between two labelings that, in addition, aids in highlighting the changes in the labeling. Furthermore, we believe that for applications that display user-generated content on a map, it is particularly important to give the user an indication of the changes in the labeling.
After all, for such data the labeling can change at any time, even without explicit interaction with the map itself.
Already Fish, Goldsberry, and Battersby~\cite{FGB.CBA.2011} noted that it is not only important that users recognize changes on dynamic maps, but they should also be able to follow and understand these changes.
Examples for such dynamic maps are numerous and range from mostly static map content, such as historic reports of traffic accidents~\cite{SFL2024}, to highly dynamic content, such as the current location of rental scooters~\cite{OBrien2024} or incidents reported to city services~\cite{StadtWien2024}.
In particular, we would like to highlight the service \emph{wettermelden.at}~\cite{GeoSphereAustria2024} offered by \emph{GeoSphere Austria}, the Austrian institute for Geology, Geophysics, Climatology, and Meteorology.
It provides a platform where users can report their local weather condition and view all reports on a map.
In addition, users can view the reports of the last twenty-four hours as a video-like animation showing a map of Austria with the reports at different (predefined) timestamps.
However, this animation lacks a transition between individual labelings, which makes it particularly difficult to track changes over time.
Therefore, we see this application as an ideal real-world example of where transitions could be implemented in practice.

\begin{figure}
	\centering
	\includegraphics[page=4]{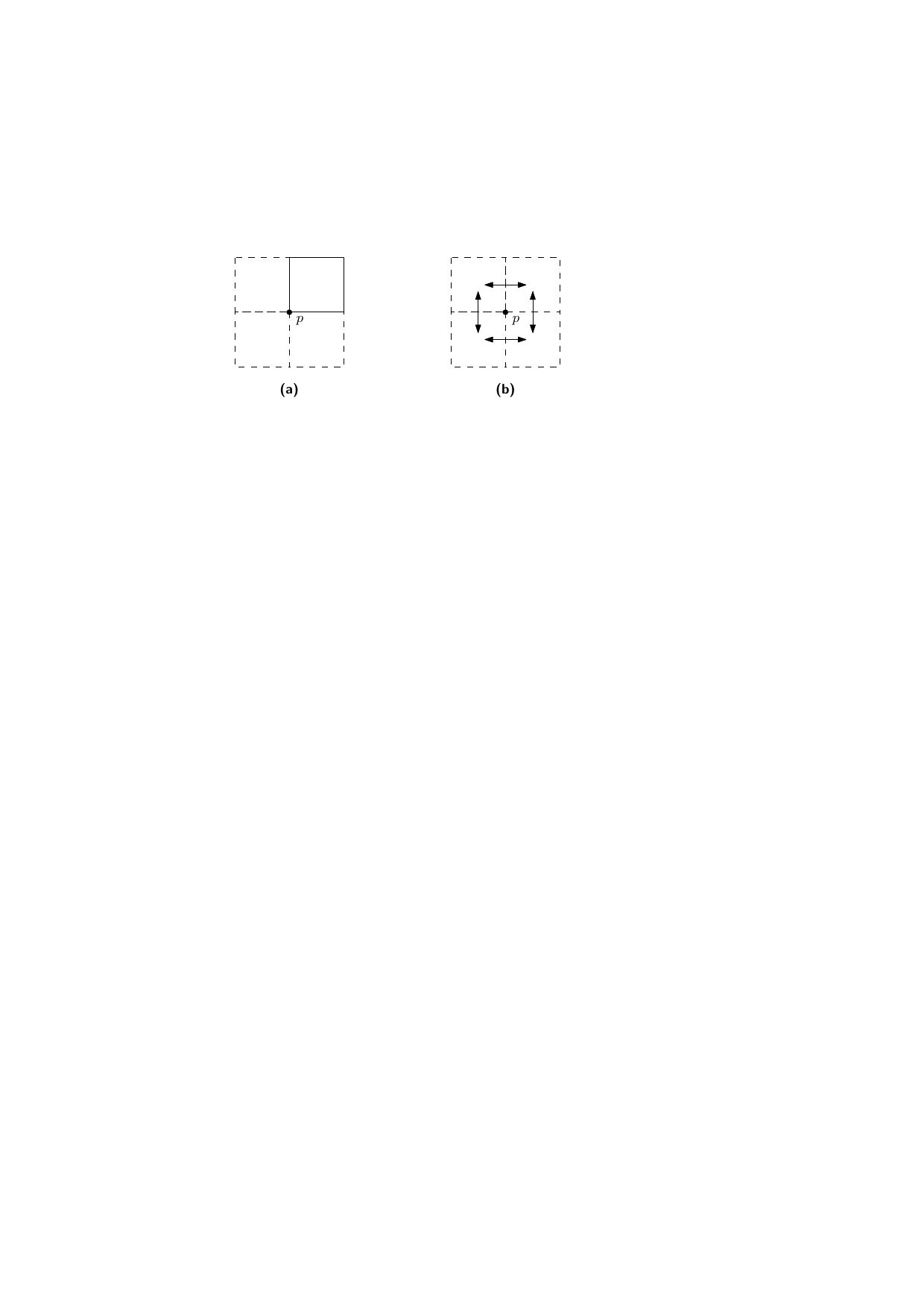}
	\caption{Visualization of the fixed and slider position model. \textbf{\textsf{(a)}} The four candidate positions for label~$l$ of point~$p$, with $l$ placed in the top-right position. \textbf{\textsf{(b)}} Labels continuously move between candidate positions using the sliding-position model.}
	\label{fig:position-model}
\end{figure}

In this paper we study transitions between labeled maps that show point features~$P$ and their labels~$L$, where each point $p_i\in P$ has a label $l_i\in L$ associated to it. Labels are axis-aligned rectangles in the classical and frequently used four-position model, that is, each point~$p_i$ has four possible candidate positions to place label~$l_i$~\cite{Formann.1991} (see \cref{fig:position-model}a).
Note that the considered candidate positions are the most preferable ones according to Yoeli~\cite{Yoeli.1972}.	However, our proposed transitions can readily be generalized to other fixed and sliding position models.
While labels are often modeled as arbitrary (axis-aligned) rectangles, we use squares with side length $\sigma = 1$ for simplicity of the exposition.
Furthermore, labels can be modeled as axis-aligned squares if they should display a single icon as in~wettermelden.at~\cite{GeoSphereAustria2024} or \cref{fig:instant_change}.
In Appendix~A, we show that our results on the number of possible label overlaps also extend to arbitrary-sized rectangles by multiplying the bound by a constant factor that depends on the ratio between the size of the rectangles. A labeling~$\GeneralLabeling \subseteq L$ of $P$ consists of a set of pairwise non-overlapping labels, and can be drawn on a map conflict-free, by drawing only the labels in \GeneralLabeling\space with their associated points.
If the label $l \in L$ for a point $p \in P$ is not contained in \GeneralLabeling, we do not draw $p$ either.

For the study of transitions, we work in a dynamic setting, where points appear and disappear at different moments in time, and the set~$P$ changes only through additions and deletions: the data we consider later consists of geotagged events, for which we know only the location at the moment they occur, and hence data points do not move. Every time changes are made to $P$, a new overlap-free labeling must be computed, thus resulting in a change from labeling~\Labeling{1}, before the changes, to labeling~\Labeling{2}, afterwards. In this paper, we study different types of transitions from~\Labeling{1} to~\Labeling{2}, namely consecutive and simultaneous transitions. In the former, only one label moves at a time, while in the latter, multiple labels can move simultaneously. During both types of transitions, the individual labels are allowed to move in the sliding-position model, i.e., are only allowed to move horizontally and vertically around their point while continuously touching it~\cite{vanKreveld.1999} (see \cref{fig:position-model}b). 
Although the obtained movement paths might not be the shortest ones, we see two benefits in using the sliding-position model:
First, it is the natural generalization of the fixed-position model and it guarantees that throughout the movement we maintain a valid, but not necessarily overlap-free, labeling.
Second, the association between a point and its label is unambiguous throughout the whole movement.
In particular, a label never occludes its point, which is not guaranteed if labels move on the shortest path directly to their target position, i.e., perform a linear transition.
Although we view the above two aspects as clear advantages over linear transitions and as a valid way of defining movements, we acknowledge that, to the best of our knowledge, there is  no user study yet validating this assumption.
We regard such empirical validation as a natural next step in the study of transitions in point labeling, positioning our contribution as an initial algorithmic exploration of this topic. In particular, our aim is to find transitions that achieve optimization criteria, such as minimizing the number of overlaps during a transition, or minimizing the time required to perform a transition. To our knowledge, this is the first time transitions have been studied systematically in this way.

\subsection{Contributions}
First, we consider the limits of the proposed transition models in terms of worst-case behavior and computational complexity from a theoretical perspective. More precisely, in \cref{sec:consecutive,sec:rma}, we analyze the worst-case number of label overlaps during consecutive and simultaneous transitions, respectively. 
In \cref{sec:rma} we additionally consider instances where we associate weights to the labels (and to their overlaps) and prove that it is \NPHard to minimize the weight of overlaps if all labels in a transition move simultaneously.
However, the instances that we consider there to construct the worst-case examples are artificially crafted in order to reach the limits and do not necessarily resemble those that we have to deal with in an actual application.
Thus, we complement our theoretical results and consider transitions from a more applied point of view in \cref{sec:case-study}.
We present an implementation of our proposed transition styles as an interactive prototype and investigate in a detailed case study on three different types of dynamic real-world data how the transition styles perform on our desired optimization goals.
The experiments reveal a trade-off between the optimization criteria and suggest different transition styles depending on the most important goal.
Furthermore, the results also indicate that some transition styles studied in \cref{sec:consecutive} are mainly of theoretic interest.

\subsection{Related Work}
Automated map labeling is a well-researched topic within the geographic information science (GIS) and computational geometry communities. In recent years, the labeling of road networks~\cite{NiedermannN16}, island groups~\cite{GoethemKS16}, time-varying maps~\cite{Barth.2016,Krumpe18}, combining labeling with word clouds~\cite{BhoreG0NW21,bglnw-wapliwc-23}, and using human-in-the-loop approaches for labeling~\cite{Klute.2019} have been investigated. Previous literature has focused on (the complexity of) computing labelings in various static~\cite{Agarwal.1998,Formann.1991,vanKreveld.1999} and dynamic or kinetic settings~\cite{Bhore.2020,BuchinG14,BergG13}. 
Most of the existing work on map labeling has focused on label legibility, i.e., required that the labels are pairwise overlap-free, but disregarded potential overlaps with map features.	For fixed positions, overlaps between a label and an important map feature can be prevented by removing certain candidate positions for the label.
However, a considerable gap remains between the desired quality criteria and those that existing algorithms support.
Nevertheless, algorithms for these restricted settings provide the foundation for more advanced labeling algorithms that consider a multitude of objectives and constraints; see, for example, Rylov and Reimer~\cite{Rylov.2014} for a formalization of more general quality criteria.
In light of this, we require only that the labelings between which we compute the transition are overlap-free.
Therefore, label legibility will serve as the primary quality criterion in our study; incorporating additional criteria is a  promising direction for future research.

Our problem description resembles the ones of work mentioned above and also other earlier work on point labeling, but it also has subtle differences: For example, the optimization criteria we care for, minimizing overlaps and time required for labels to move, were already investigated by de Berg and Gerrits~\cite{BergG13}. They showed that there often is a clear trade-off between these criteria when dealing with moving labels. However, in their model, points are allowed to move (even during label movement), while our points are static, and change only through additions and deletions. Furthermore, in the \textsc{PSPACE}-hardness framework by Buchin and Gerrits~\cite{BuchinG14} points are often static and only labels move. Hence, their dynamic labeling instances are similar to transitions. Though, a distinct difference is that labels must be allowed to move back and forth in the dynamic labeling instances of the hardness reduction. Since we disallow detours in transitions (see~\Cref{sec:prob-desc}), this reduction is not easily transferred to our setting.
In the field of dynamic or interactive map labeling, controlling the position of the labels after a map interaction plays a central role for guaranteeing the labeling to remain consistent.
Been, Daiches, and Yap~\cite{Been.2006} were among those that pioneered this field of research by proposing four desiderata that a consistent dynamic labeling should achieve under zooming, panning and label selection operations.
They modeled each label as a cone to represent it across different zoom levels.
Computing a consistent labeling then corresponds to the problem of finding for each cone a range of zoom levels on which the label should be displayed.
This resulted in the \emph{active range optimization} problem, whose flavors have found considerable attention in the literature~\cite{Been.2010,Liao.2016,Zhang.2020,Gemsa.2011b}.
Prior to Been et al.~\cite{Been.2006},
Petzold, Gr{\"o}ger, and Pl{\"u}mer~\cite{Petzold.2003} used a conflict graph capable of tracking label overlaps across different map scales to label points, lines, and areas on interactive maps.
The versatility of their approach comes at the cost of not having a consistency guarantee for the computed labelings.
Apart from modeling concrete interactions with a map, the literature also contains approaches that work on an abstract activity interval that captures the points in time when a label can be displayed on a map~\cite{Gemsa.2013,Barth.2016}.
Furthermore, other work on dynamic map labeling either did not consider the desiderata by Been et al.~\cite{Poon.2005,Klute.2019}, or enforced for a label the same position independent of the visible map cutout, thus trivially satisfying it~\cite{Krumpe18,Gemsa.2016a}.
Note that by using the sliding position model to perform our movements, we will achieve this desideratum.
However, we only focus on the transition between two given labelings and neglect the non-trivial problem of computing two consistent labelings.
Our work is also closely related to the one by Schwartges, Haunert, Wolff, and Zwiebler~\cite{SchwartgesHWZ14}, who considered the problem of labeling a set of points on a dynamic map in the one-slider model.
However, their paper differs not only in terms of the position model used, but also in the underlying problem.
To that end, they focus on finding labelings for different map cutouts for a time-independent point set, which when combined should yield a consistent labeling of the map, but they do not consider the actual transition between the individual labelings of different cutouts.
Furthermore, moving polygonal labels on the plane to remove overlaps or increase their legibility has also been studied in the past~\cite{Garderen.2017,Gaertner.2024}.
Finally, our analysis of the number of overlaps in transitions draws multiple parallels with the analysis of topological stability, introduced in the framework for algorithm stability~\cite{MeulemansSVW18}. This framework provides various (mathematical) definitions of stability for algorithms on time-varying data: Intuitively, small changes in the input of an algorithm should lead to small changes in the output. One such stability definition, called topological stability, prescribes that the output changes continuously. The (topological) stability ratio of an algorithm then measures how close to optimum the stable output is: when an optimal solution undergoes a discrete change, a topologically stable output has to continuously morph through suboptimal solutions. Similarly, we analyze in this paper transitions with continuous movements in the sliding position model of various styles. We then analyze how close to overlap-free a labeling is during a transition by counting overlaps.

Our work is motivated by the lack of a systematic and algorithmic analysis of approaches to visualize the changes between two point labelings.
Such animations are widely used in modern maps, and understanding their effectiveness and applicability in GIS applications and beyond has gained attention in the literature; see also the overviews of Harrower~\cite{Har.CLA.2007} and Cybulski~\cite{Cyb.CAM.2024} or the survey by Harrower and Fabrikant~\cite{HF.RMA.2008}.
The justification for animations lies in the known limitation of humans in recognizing changes between static images and understanding the changes~\cite{Har.CLA.2007}.
However, their potential depends to a great extent on their implementation~\cite{Har.CLA.2007}, see also Fabrikant and Goldsberry~\cite{FG.Trp.2005}, and by now the literature proposes a set of design principles~\cite{MM.NWR.2003}.
The known limitation in perceiving differences mentioned above is often referred to as \emph{change blindness}~\cite{SR.Cbp.2005}; see also Stehle and Kitchin~\cite{SK.Rta.2019} for an overview of previous work on this topic.
Goldsberry and Battersby~\cite{GB.ICD.2009} suggested three levels of \emph{change detection}, or, symmetrically, change blindness: recognizing a change, detecting the type of change, and understanding the meaning of a change, see also Fish and Goldsberry, and Battersby~\cite{FGB.CBA.2011}.
Although these levels where introduced for choropleth maps, they also find applications outside their intended domain.
In particular, with our transitions, we aim at achieving the highest level of change detection, where the user is able to relate the individual labels in the two labelings.
Fish et al.~\cite{FGB.CBA.2011} assessed the influence of abrupt and smooth changes in choropleth maps on the change detection level.
They concluded that the design of transitions does matter, while favoring smooth over abrupt changes.
Krassanakis, Filippakopoulou, and Nakos~\cite{KFN.Dmp.2016} investigated the minimum duration of a change to be perceivable by humans.
They concluded that it takes around 400 milliseconds for participants to detect a moving point on a map.

In the next section, we define the problem that we consider in this paper, discuss the components of a transitions, and the goals they should achieve.
Although there are numerous ways to describe a transition and its desired properties, we believe that the model we study strikes a good compromise between simplicity, flexibility, and applicability.
In particular, some of the modeling choices we make are motivated by the findings discussed above.

\subsection{Problem Description}
\label{sec:prob-desc}
Let~$P$ be a finite point set of points in $\mathbb{R}^2$ and $L$ a set of labels, where each point $p_i\in P$ has a label $l_i\in L$ associated to it.
Given two (overlap-free) labelings \Labeling{1} and \Labeling{2} of $(P,L)$, i.e., two subsets of $L$, each containing only pairwise non-overlapping labels, we denote a transition between them with \Transition{}. Such a transition consists of changes of the following types.
\begin{description}
	\item[Additions] If only label $l_i$ of a feature point $p_i$ must be added, we denote this by \Transition{A_i}.
	\item[Removals] If only label $l_i$ of a feature point $p_i$ must be removed, we denote this by  \Transition{R_i}.
	\item[Movements] If only label $l_i$ of a feature point $p_i$ must change from its position in~\Labeling{1} to a new position in~\Labeling{2}, we denote this by \Transition{M_i}.
	Movements are unit speed and axis-aligned, in the sliding-position model.
	Note that a diagonal movement, as in \cref{fig:transition-goal-visualization}a (left), is composed of a horizontal and a vertical movement, and hence takes two units of time.
\end{description}
A label is \emph{stationary} if it remains unchanged during a transition.
Applying multiple transitions consecutively is indicated by chaining the corresponding transition symbols: \Transition{M_i M_j} denotes that label $l_i$ moves before label $l_j$.
Furthermore, \Transition{M} is a shorthand for applying all movement-transitions simultaneously. All these notions extend to additions and removals, using \TransitionStyle{A} and \TransitionStyle{R}, respectively, instead of \TransitionStyle{M}.
A transition has no effect if no point must be transformed with the respective transition, e.g., even if there are no additions, the transition \Transition{A} is still applicable; it simply does not modify the labeling.

We aim to identify types of transitions that try to achieve the following goals.

\begin{figure}
	\centering
	\includegraphics[page=6]{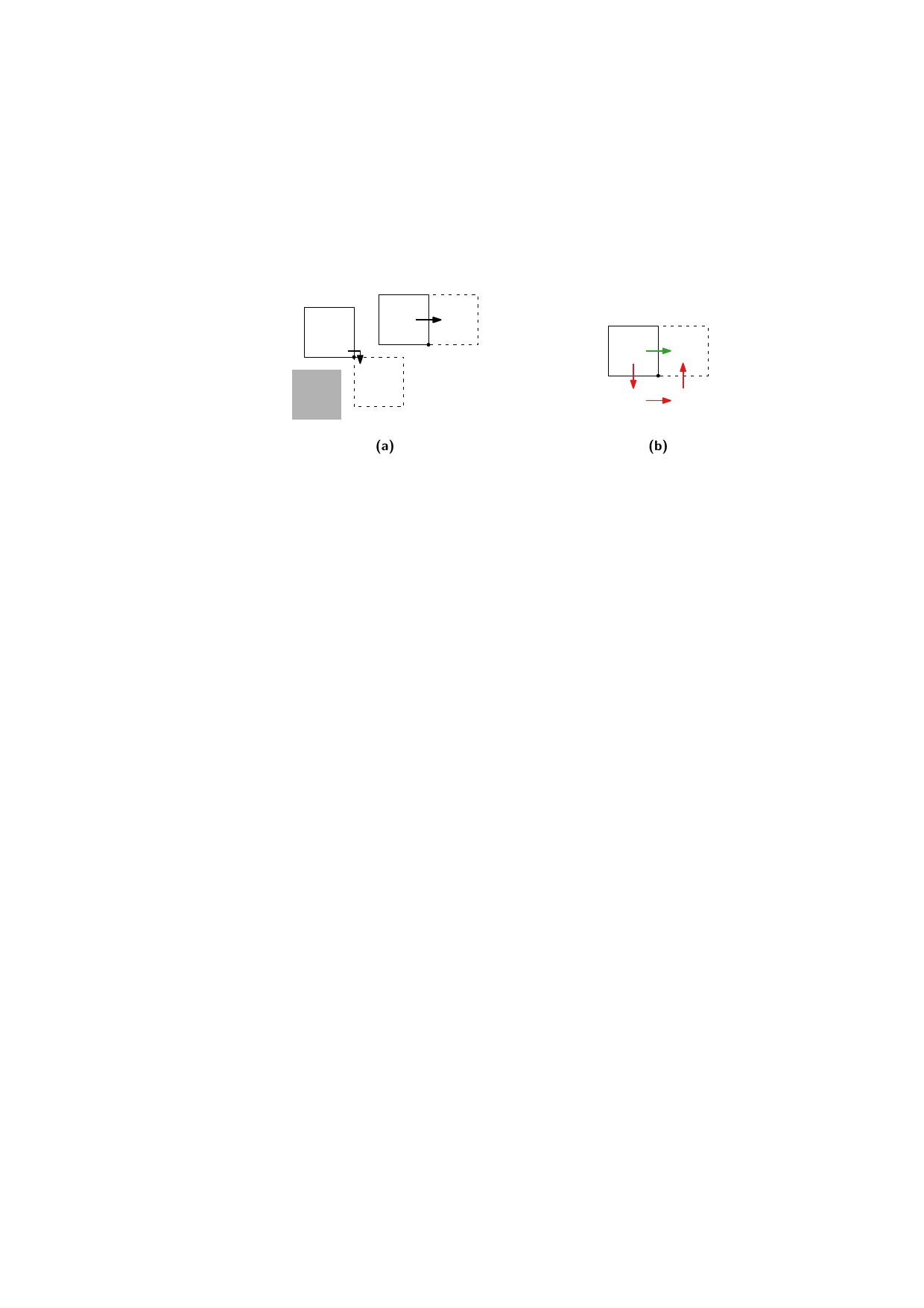}
	\caption{Transition goals studied in this paper. \textbf{\textsf{(a)}} \GoalOverlapas: Minimizing overlaps by moving around the gray stationary label. \textbf{\textsf{(b)}} \GoalDuration: Minimizing duration by using a single movement along the green arrow, instead of moving along the red arrows.}
	\label{fig:transition-goal-visualization}
\end{figure}

\begin{description}
	\item[\GoalOverlapas -- Minimize overlaps]
	While the two labelings are overlap-free, overlaps can occur during the transition from \Labeling{1} to \Labeling{2}. When too many overlaps happen at the same time, certain labels may (almost) completely disappear behind others during a transition, which defeats the purpose of the transition: allowing users to follow changes in the labeling.
	Thus, those overlaps should be avoided as much as possible, by, for instance, adjusting the movement direction of labels, as shown in \cref{fig:transition-goal-visualization}a. 
	After all, we see transitions as a way of interpolating between two overlap-free labelings, and, therefore, should aim at minimizing the overlaps during the interpolation.
	\item[\GoalDuration -- Minimize transition duration]
	Our main goal is to show a map in a (mostly) static state.
	However, we do not want to instantly swap \Labeling{1} for \Labeling{2}, since the user will have difficulties tracking all changes~\cite{Rensink.1997}. Though, a transition that takes too long can also cause users to lose attention~\cite{Nielsen.1993}.
	Hence, we want transitions that can be completed in a short amount of time.
	This can be achieved by disallowing detours, as in \cref{fig:transition-goal-visualization}b, 
	or by performing the changes simultaneously.
\end{description}

Note that ideally, one would also try to minimize the number of moving labels, as studies have shown that the amount of information humans can process is limited~\cite{Miller.1956}.
However, in this paper, we assume that we are given the two labelings \Labeling{1} and \Labeling{2}, which thus dictate the labels that have to move.
Therefore, we see the task of computing a \emph{stable} labeling \Labeling{2}, i.e., one in which only a few labels move, as an interesting research question in its own right.
Furthermore, we focus on optimizing the transition \emph{between} the labelings and assume for the remainder of this paper that the challenging task of computing a good labeling is solved separately.

Optimizing both goals \GoalOverlapas\ and \GoalDuration\ simultaneously is often impossible as there can be a trade-off: performing the transition as fast as possible to achieve \GoalDuration\space often leads to unnecessary overlaps, while preventing as many overlaps as possible to achieve \GoalOverlapas\space may require more time.
However, to work towards both \GoalOverlapas\space and \GoalDuration, we can perform all additions simultaneously, as well as all removals.
Furthermore, if we perform removals before movements, and movements before the additions, we create free space for the movements, to reduce the number of overlaps without wasting time. Let \TransitionStyle{X} be an arbitrary way of performing all movements required to change from \Labeling{1} to \Labeling{2} (consecutively or simultaneously), then we can observe the following.
\begin{observation}
	\label{obs:tg1-tg2}
	A transition of the form \Transition{RXA} aids in achieving both \GoalOverlapas\space and \GoalDuration.
\end{observation}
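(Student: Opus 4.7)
Since this is an observation rather than a theorem, my plan is to give two concise arguments, one per goal, rather than an optimality proof. The design insight to exploit is that both \Labeling{1} and \Labeling{2} are overlap-free by assumption, so labels that occur in exactly one of them are the natural candidates to either remove first or add last.

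To argue that the ordering aids \GoalOverlapas, I would observe that a label present in \Labeling{1} but not in \Labeling{2} can, during the movement phase, only act as an additional obstacle: it does not conflict with anything in \Labeling{1}, but a moving label may well have to traverse its position. Removing all such labels before the movement phase eliminates this source of overlap. A symmetric argument handles additions: a label in \Labeling{2} but not in \Labeling{1} may overlap with labels that have not yet reached their target positions in \Labeling{2}; inserting it only after the movement phase avoids this. Thus any overlap that occurs during a transition \Transition{RXA} is already present in the pure movement phase \TransitionStyle{X} performed on the reduced configuration, whereas other orderings may introduce strictly more conflicts.

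To argue that the ordering aids \GoalDuration, I would note that additions and removals are discrete events with no inherent duration, while only movements actually consume time. Batching all removals into a single \TransitionStyle{R}-phase before \TransitionStyle{X} and all additions into a single \TransitionStyle{A}-phase after \TransitionStyle{X} therefore adds essentially no time compared to performing a single removal or addition. Hence the duration of \Transition{RXA} is dominated by the duration of \TransitionStyle{X}, which is unavoidable in any realization of the movements. Interleaving \TransitionStyle{R}- or \TransitionStyle{A}-events inside \TransitionStyle{X}, or handling them one-by-one, would only insert waiting periods without benefit.

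The main obstacle is that the observation is informal: it only claims that the ordering \emph{aids} the two goals, not that it is optimal for either. Making the proof rigorous therefore reduces to stating monotonicity properties precisely, namely that switching to the \TransitionStyle{RXA}-scheme (i) does not increase the number of overlap-events relative to alternative orderings that use the same \TransitionStyle{X}, and (ii) does not add more than a constant to the transition duration beyond that of \TransitionStyle{X}. Both follow from the arguments above; the remaining difficulty is mainly notational, in particular formalizing what counts as an overlap-event over a continuous time interval.
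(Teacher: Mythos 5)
Your argument matches the paper's own (informal) justification: the paper likewise observes that batching all removals before the movements and all additions after them creates free space for the movements, reducing overlaps without adding time, since additions and removals can each be performed simultaneously. Your elaboration of the symmetric roles of removals and additions and the remark that the claim is only a monotonicity statement, not an optimality one, are consistent with how the paper treats this as an observation rather than a theorem.
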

By separating additions, removals, and movements in transitions we also group together related labels, in our case those with the same change pattern, as suggested by Harrower~\cite{Har.CLA.2007}.

We introduce two overarching \emph{transition styles} in this paper: \emph{consecutive transitions} and \emph{simultaneous transitions}. Each such transition style is a variant of the style \TransitionStyle{RXA}, as prescribed by \cref{obs:tg1-tg2}, and fills in the movement described by \TransitionStyle{X} in a unique way. For a consecutive transition the movement \TransitionStyle{X} consists of a sequence of individual label movements, whereas for a simultaneous transition we have \TransitionStyle{X=M}. These transition styles each incur different transition durations. Since we expect a trade-off between \GoalOverlapas\ and \GoalDuration, we specifically analyze the number of overlaps during transitions of the two styles.

\section{Consecutive Transitions}
\label{sec:consecutive}
We begin our investigation by considering transition styles where one label at a time moves.
Although, from a practical perspective, this type of movement may be less favorable, especially if many labels need to be moved, we consider consecutive movements as the most fundamental movement pattern.
Accordingly, analyzing consecutive transitions serves as a foundational step towards identifying and understanding more sophisticated movement patterns, some of which are discussed in later sections of this paper while others are left for future work.
		
For consecutive transitions, we first upper-bound in \cref{sec:naive} the number of overlaps for an arbitrary order of the moving labels.
In \cref{sec:dag}, we analyze a more elaborate transition style that takes dependencies among moving labels into account to prevent certain overlaps.
\subsection{Naive Transitions}
\label{sec:naive}
We start with evaluating the potential overlaps for a single label performing its movement.
\cref{fig:naive-eight-overlaps}a shows how only a single stationary square label can interfere with the moving label.

\begin{lemma}
	\label{lem:one-point}
	In \Transition{R M_i A}, where only label $l_i$ moves, at most one overlap can occur. 
\end{lemma}
\begin{proof}
	As we perform removals before the movement and additions afterwards, we can guarantee that the start and end positions of label $l_i$ are overlap-free. Thus any overlap can occur only during diagonal movement of $l_i$, when $l_i$ moves from one candidate position in \Labeling{1}, to a non-adjacent candidate position in \Labeling{2}. Assume without loss of generality that $l_i$ traverses the lower-left label position, when moving from top-left to bottom-right. Only a single other (stationary) label $l_j$ can be positioned such that both \Labeling{1} and \Labeling{2} are overlap-free and the label overlaps with the area traversed by $l_i$ (see \cref{fig:naive-eight-overlaps}a). 
	Any additional label overlapping the traversed area, without overlapping $l_j$, would overlap the start or end position of $l_i$.
\end{proof}

Next we consider an arbitrary order of all $n$ moving labels in a transition. 
We define a conflict graph, which has a vertex for each moving label, and an edge between overlapping labels. With a packing argument we locally bound the degree of each of the $n$ moving labels to fourteen by considering the start, intermediate, and end position of such a label (these overlaps are achieved in \cref{fig:naive-eight-overlaps}b).
By the handshaking lemma, which states that in each graph the sum of vertex degrees equals twice the number of edges, this results in at most $7n$ overlaps.

\begin{figure}
	\centering
	\includegraphics[page=5]{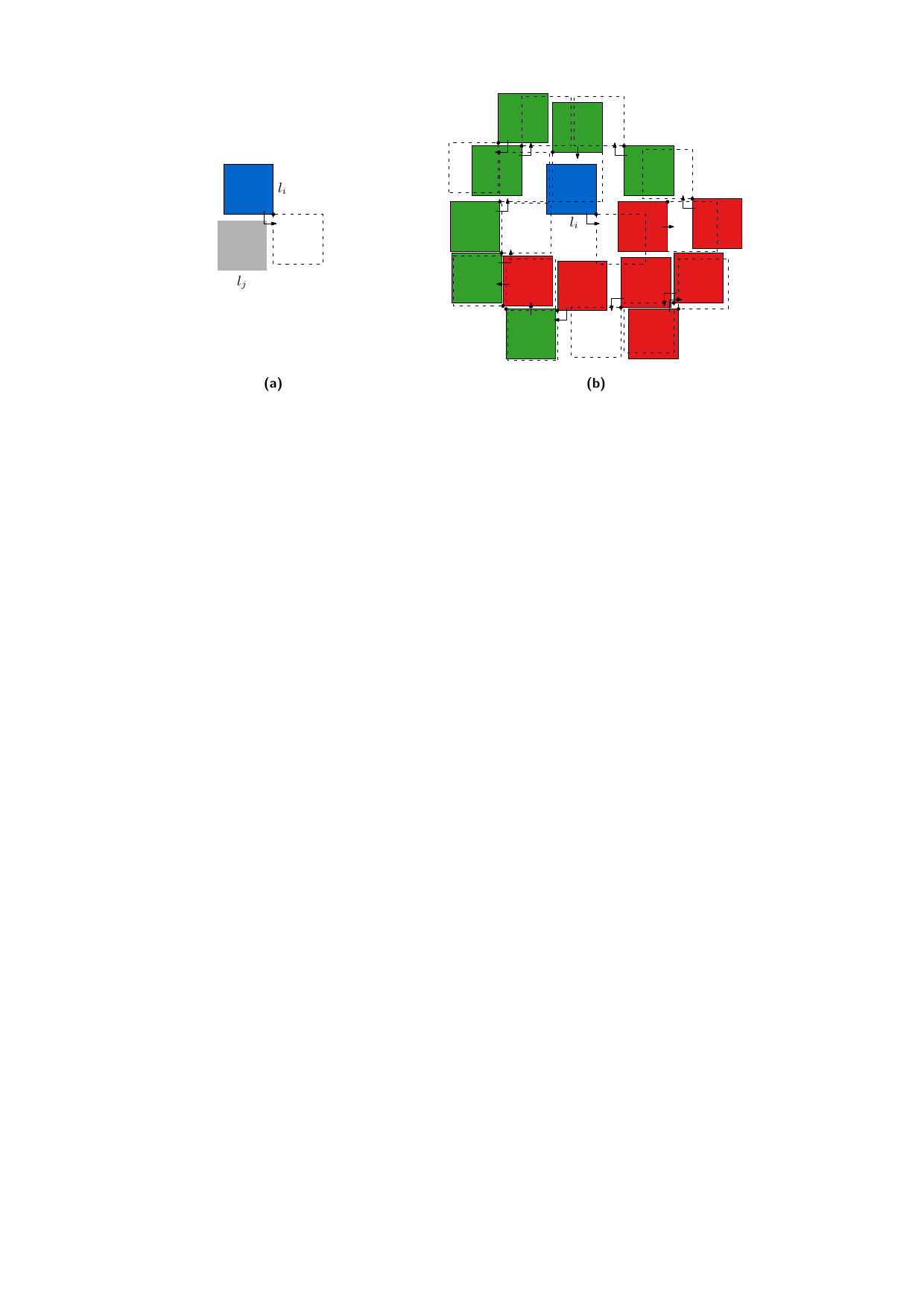}
	\caption{Constructions that achieve the claimed upper bound on the number of overlaps that we show in \textbf{\textsf{(a)}} \cref{lem:one-point} and \textbf{\textsf{(b)}} \cref{lem:consecutive-movements}. \textbf{\textsf{(a)}} Since all labels are squares with side length~$\sigma$, the moving blue label~$l_i$ can overlap only a single gray stationary label~$l_j$. \textbf{\textsf{(b)}} The blue label~$l_i$ overlaps fourteen other labels during the movement transitions. The green labels move before $l_i$, red labels move after $l_i$. We show in gray one possible conflict graph for this setting.}
	\label{fig:naive-eight-overlaps}
\end{figure}

\begin{lemma}
	\label{lem:consecutive-movements}
	In \Transition{R M_1 \dots M_n A} at most $7n$ overlaps can occur.
\end{lemma}
\begin{proof}
	When moving labels consecutively, not only the traversed area of a label $l_i$ can be occupied, but also its start or end position can be occupied by another label~$l_j$ that moves earlier or later, respectively.
	Such a situation produces one overlap per $l_j$, and we will show that at most fourteen such labels can exist. 
	See \cref{fig:naive-eight-overlaps}b 
	for an example of the construction that achieves this number of overlaps. However, each overlap involves two labels in this case, and hence should not be counted double. To prevent this, we build a conflict graph, where each vertex represents a label and each edge between two vertices represents an overlap between the respective labels. If we can bound the degree of each vertex by a constant~$c$, in our case $c=14$, then the total number of overlaps for $n$ moving labels is at most $\frac{c}{2}n = 7n$, by the handshaking lemma.
	Let us consider the case in which label $l_i$ performs a diagonal movement.
	We later argue why this ultimately leads to the most overlaps for $l_i$.
	Assume w.l.o.g.~that label $l_i$ performs a diagonal movement from the top-left to the bottom-right position, moving through the bottom-left position. Let $L'\subseteq L$ be the set of labels overlapping $l_i$ during movement and see \cref{fig:naive-overlaps-split,fig:naive-overlaps-split-2} for our construction.
	
	The labels in~$L'$ which overlap the start position of $l_i$ can start overlapping only during the transition, as otherwise \Labeling{1} is not overlap-free. Thus the start position of $l_i$ can overlap only with end positions (and intermediate positions) of labels in~$L'$. Symmetrically, labels in~$L'$ overlapping the end position of $l_i$ cannot overlap their assigned position at the end of the transition, since \Labeling{2} would then not be overlap-free. Hence the end position of~$l_i$ can overlap with only start (and intermediate) positions of labels in~$L'$.
	
	Since the labels in~$L'$ can also move themselves, we can allow some of the labels in $L'$ to overlap while keeping \Labeling{1} and \Labeling{2} overlap-free. For example, we can overlap the start position of one label in~$L'$ with the end position of another label in~$L'$. This overlap cannot happen on the start and end positions of~$l_i$, as these positions of~$l_i$ overlap only with end and start positions of labels in $L'$, respectively, when \Labeling{1} and \Labeling{2} are overlap free. However, we can overlap the start and end positions of two labels in~$L'$, if those positions occupy the intermediate position of~$l_i$ in the bottom-left corner; \Labeling{1} and \Labeling{2} can then still be overlap-free.
	
	\begin{figure}
		\centering
		\includegraphics{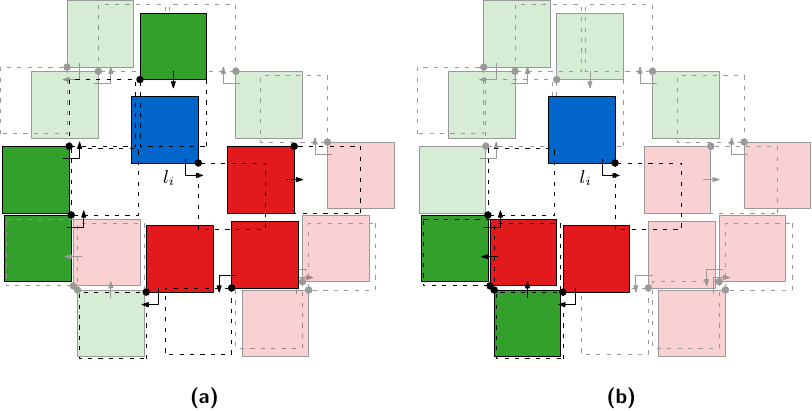}
		\caption{
			Parts of the constructed instance from \cref{lem:consecutive-movements}.
			The green labels move before~$l_i$, red labels move after~$l_i$.
			All labels form an instance that achieves the proved upper bound. \textbf{\textsf{(a)}} Labels overlapping with the blue label~$l_i$ at their start/end position. \textbf{\textsf{(b)}} Labels overlap~$l_i$ at its intermediate position.
		}
		\label{fig:naive-overlaps-split}
	\end{figure}
	
	First consider the start position of~$l_i$. Labels in $L'$ intersecting this position should necessarily move before $l_i$, as otherwise $l_i$ has moved away from its start position. The start position can overlap with at most three pairwise overlap-free end positions of labels in~$L'$ (see \cref{fig:naive-overlaps-split}a), as the end position of a fourth label would necessarily also overlap the end position of~$l_i$. Furthermore, intermediate positions of diagonally moving labels in~$L'$ can also overlap with the start position of~$l_i$. If we want such labels to overlap only the start position of~$l_i$, and not the end or intermediate position of~$l_i$, then there are at most four such labels in~$L'$ (see \cref{fig:naive-overlaps-split-2}a): two intersecting the top left corner of~$l_i$, one moving diagonally to the top-right, and another moving diagonally to the bottom-left, and two labels moving symmetrically at the top right corner of~$l_i$.
	
	Second, consider the end position of~$l_i$. Labels in $L'$ intersecting this position necessarily move after $l_i$, to ensure $l_i$ has reached its end position. For labels in~$L'$ starting on the end position of~$l_i$ we can repeat the whole argument for the start position of~$l_i$, but exchange ``start position'' by ``end position'', and vice versa (see again \cref{fig:naive-overlaps-split}a). For labels in~$L'$ whose intermediate position overlaps with only the end position of~$l_i$, we know that they must overlap in the top-right and bottom-right corner. Again for each corner there are at most two labels moving in opposite diagonal directions along the corners (see \cref{fig:naive-overlaps-split-2}b).
	
	\begin{figure}
		\centering
		\includegraphics{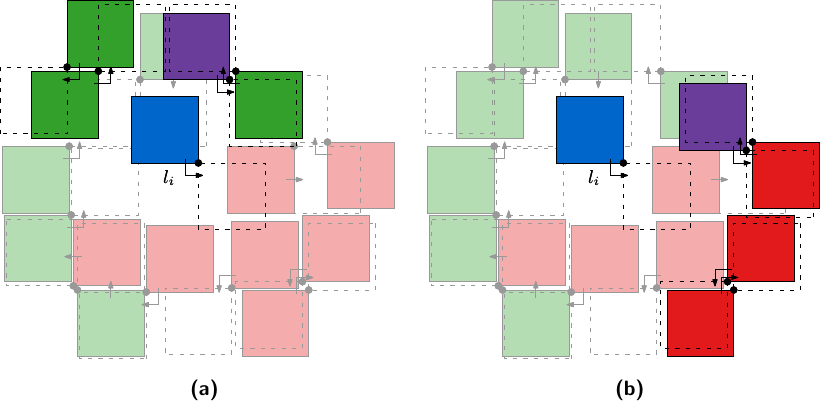}
		\caption{
			Labels whose intermediate positions overlap with the blue label~$l_i$ at its start and end position.
			The green labels move before~$l_i$, red labels move after~$l_i$.
			Only purple labels overlap with other labels.
			All labels form an instance that achieves the proved upper bound from \cref{lem:consecutive-movements}. \textbf{\textsf{(a)}} Labels overlapping with~$l_i$ at its start position. \textbf{\textsf{(b)}} Labels overlapping with~$l_i$ at its end position.}
		\label{fig:naive-overlaps-split-2}
	\end{figure}
	
	Third, consider the intermediate position of~$l_i$ and see \cref{fig:naive-overlaps-split}b for the following argumentation. Labels whose intermediate position overlaps with the intermediate position of~$l_i$ are not interesting: such labels will never overlap, since movement is consecutive. The intermediate position of $l$ can hence overlap with at most four labels in~$L'$, two labels extending into the start and end positions of~$l_i$, respectively, and two labels, with overlapping start and end positions, in the corner opposite of the corner where the point $p_i$ of label~$l_i$ is located. We can make two important observations here. The two overlapping labels can be located only in the intermediate position of~$l_i$, as otherwise we would have overlapping labels at the start or end position of~$l_i$. Additionally, the start and end positions of labels in~$L'$ cannot overlap with $p_i$, as these labels would then overlap with both the start and end positions of $l_i$ and in this way cause an overlap.
	
	Finally, we combine the previous observations geometrically. The start and end positions of~$l_i$ overlap with at most seven labels each, while the intermediate position overlaps with at most four labels. The labels overlapping the intermediate position, and extending into start or end position, should necessarily be one of the three labels intersecting the start or end position of~$l_i$, as otherwise the labelings before or after the transition are not overlap-free. Furthermore, since the movement of $l_i$ is diagonal, first downwards and then rightwards, the top-right corners of the start and end positions are located close to each other. As a result, the labels whose intermediate positions intersect these corners can overlap (see Figures~\ref{fig:naive-overlaps-split-2}a and b). The start positions of the label moving from bottom-right to top-left at the start position, and the label moving from top-left to bottom-right at the end position overlap, and only one of the two labels can exist simultaneously in an overlap-free labeling. Similarly, the end positions of the label moving from top-left to bottom-right at the start position, and the label moving from bottom-right to top-left at the end position overlap, and again only one of them can exist at the same time. Thus we have twelve labels in total for the start and end position of~$l_i$, and two labels intersecting only the intermediate position, leading to at most fourteen labels overlapping~$l_i$.
	
	To conclude the proof, observe that a diagonal movement of~$l_i$ leads to more overlaps than a movement to an adjacent position. When the start and end positions are adjacent, each position can overlap only with two end and start positions respectively. However, since the corners of the start and end positions are no longer as close as before, each corner of, for example, the start position, not adjacent to the end position, can now have both labels whose intermediate position overlaps that corner --- the geometric argument from the diagonal case no longer applies. This leads to six overlaps per start and end position. As there is no intermediate position, the total number of overlaps for~$l_i$ is at most twelve in this case. 
\end{proof}

\begin{figure}
	\centering
	\includegraphics[page=4]{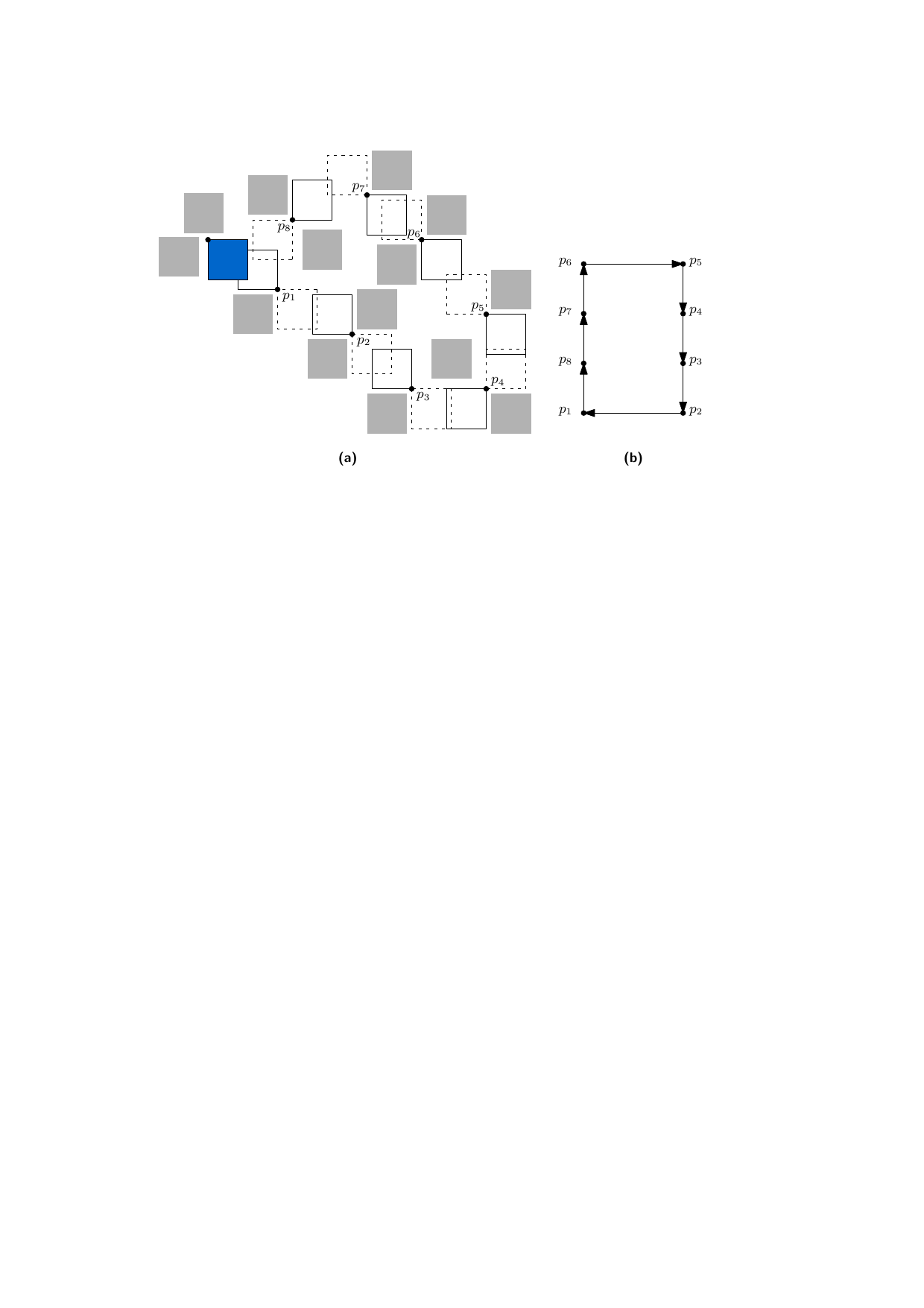}
	\caption{Construction showing that the bound of $n + m$ overlaps (here: $n=8$ and $m=1$) from \cref{thm:dag} is tight. \textbf{\textsf{(a)}} The blue label is added in this transition and forces $n + m$ inevitable overlaps during movement. Gray labels are stationary. \textbf{\textsf{(b)}} The corresponding movement graph for the instance in \textbf{\textsf{(a)}}.}
	\label{fig:dag_n_m_overlaps}
\end{figure}

\subsection{DAG-based Transitions}
\label{sec:dag}
To refine the naive approach, we model dependencies between movements in a \emph{movement graph}, and use it to order movements and avoid certain overlaps.
\begin{definition}[Movement graph]
	\label{def:movement-graph}
	Let $\mathcal{M} = \{\TransitionStyle{M_1}, \dots, \TransitionStyle{M_n}\}$ be a set of movements. 
	Create for each movement $\TransitionStyle{M_i} \in \mathcal{M}$ a vertex $v_i$, and create a directed edge from $v_i$ to $v_j$, $\DirectedEdge{v_i}{v_j}$, if some intermediate or end position of \TransitionStyle{M_j} overlaps with the start position of \TransitionStyle{M_i}, or the end position of \TransitionStyle{M_j} overlaps with some intermediate position of \TransitionStyle{M_i}: In both cases movement~$M_i$ should take place before movement~$M_j$.
	If intermediate positions of \TransitionStyle{M_i} and \TransitionStyle{M_j} overlap, create the edge $\DirectedEdge{v_i}{v_j}$, $i < j$.
	This results in the movement graph $G_{\mathcal{M}}$ (see \cref{fig:dag_n_m_overlaps}b).
\end{definition}
A \emph{feedback arc set} in a movement graph is a subset of edges that, when removed, breaks all cycles, resulting in a directed acyclic graph (DAG). We order movements using this DAG.
\begin{theorem}
	\label{thm:dag}
	Movements in \Transition{R M_1 \dots M_n A} can be rearranged such that at most $n + m$ overlaps occur, if $G_{\mathcal{M}}$, with $\mathcal{M} = \{\TransitionStyle{M_1}, \dots, \TransitionStyle{M_n}\}$, has a feedback arc set of size~$m$.
\end{theorem}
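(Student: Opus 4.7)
The plan is to compute a feedback arc set $F$ of size $m$ in $G_{\mathcal{M}}$, so that $G_{\mathcal{M}} \setminus F$ is a DAG, fix a topological order of this DAG, and relabel so that $M_1, \dots, M_n$ are executed in that order inside \Transition{R M_1 \dots M_n A}. The removals before and additions after the movement block are handled exactly as in Observation~\ref{obs:tg1-tg2}, so I only need to account for overlaps caused by the movements. I split these overlaps into two disjoint types: those involving a stationary label (one not in $\mathcal{M}$) and those between two moving labels.

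For overlaps of the first type I reuse the geometric argument underlying Lemma~\ref{lem:one-point}. A stationary label lies in both \Labeling{1} and \Labeling{2}, so it cannot overlap the start or the end position of any moving label $l_i$; only the intermediate position of $l_i$ is at risk. Exactly as in Lemma~\ref{lem:one-point}, at most one such stationary label fits there without colliding with the (free) start or end position of $l_i$, giving at most one overlap per moving label and thus at most $n$ overlaps in total for this type.

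For overlaps of the second type I claim that, whenever an edge of $G_{\mathcal{M}}$ is \emph{respected} by the topological order, the conflict it encodes does not produce an overlap; only the reversed edges, i.e., the edges of $F$, can. I will run through the three edge-creation rules of Definition~\ref{def:movement-graph}: if an intermediate or end position of $M_j$ overlaps the start of $M_i$, executing $M_i$ first means $l_i$ has vacated its start before $l_j$ ever visits the conflicting cell; if the end of $M_j$ overlaps an intermediate position of $M_i$, executing $M_i$ first means $l_i$ has passed through that cell before $l_j$ arrives at its end; and if the two intermediate positions coincide, consecutive movement prevents the two labels from occupying the shared cell simultaneously. Conversely, any overlap between two moving labels $l_i, l_j$ must involve one of their start/intermediate/end positions pairwise; the combinations start--start and end--end are ruled out because \Labeling{1} and \Labeling{2} are overlap-free, and each remaining combination matches exactly one of the three edge-creation rules, so the overlap can be charged to a unique edge of $G_{\mathcal{M}}$ that is reversed by the order, i.e., an edge of $F$.

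The main obstacle is this second step: I need to do the geometric case analysis exhaustively and, most importantly, make sure that each unordered pair $\{l_i, l_j\}$ of moving labels is charged to at most one edge, so that the $m$ edges of $F$ produce at most $m$ overlaps rather than $m$ events per pair. Following the same pair-based counting used in the conflict graph of Lemma~\ref{lem:consecutive-movements} handles this cleanly, since an overlap is a property of a label pair, not of a particular position-combination. Combining the two contributions yields at most $n + m$ overlaps, as claimed.
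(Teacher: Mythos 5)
Your proof is correct and follows essentially the same route as the paper's: at most $n$ overlaps with stationary labels (one per moving label, via the geometric argument of Lemma~\ref{lem:one-point}), plus at most $m$ overlaps between pairs of moving labels, each charged to a reversed edge of the feedback arc set after executing the movements in a topological order of the remaining DAG. You make explicit the edge-by-edge case analysis and the pair-based charging that the paper's proof leaves implicit, but the underlying decomposition and counting are identical.
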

\begin{proof}
	By \cref{lem:one-point}, we know that at most one overlap occurs when moving a single label to a free end position.
	This leads to at most $n$ overlaps for $n$ consecutively moving labels, if no label moves to (or through) a position occupied by a label, which starts moving later.
	
	Let $G_{\mathcal{M}}$ be a movement graph with $\mathcal{M} = \{M_1, \dots, M_n\}$. There are two cases: 
	\begin{description}
		\item[Case (1)] If $G_{\mathcal{M}}$ is acyclic, then handling all movements according to any topological ordering of the vertices of $G_{\mathcal{M}}$ produces no additional overlaps. 
		\item[Case (2)] If $G_{\mathcal{M}}$ contains cycles, then overlaps may be inevitable because each label in such a cycle wants to move to or through a position that is occupied by another moving label.
		Moreover, as the movements happen consecutively, one label in this cycle must move first and therefore may cause an overlap. 
		Let $m$ be the smallest number of edges that must be removed to break each cycle in $G_{\mathcal{M}}$, i.e., the size of a minimum feedback arc set~$S$.
		As $G_{\mathcal{M}}$ is cycle-free after removing~$S$, case (1) applies and $m$ additional overlaps suffice. \qedhere
	\end{description}
\end{proof}

We can see with the example instance from \cref{fig:dag_n_m_overlaps}a that this bound is tight.
Furthermore, it is not always necessary to perform all movements consecutively.
We can observe that movements which are unrelated in $G_{\mathcal{M}}$ can be performed simultaneously: when no overlap is possible, there is no edge in $G_{\mathcal{M}}$.

\section{Simultaneous Transitions}
\label{sec:rma}

\begin{figure}
	\centering
	\includegraphics{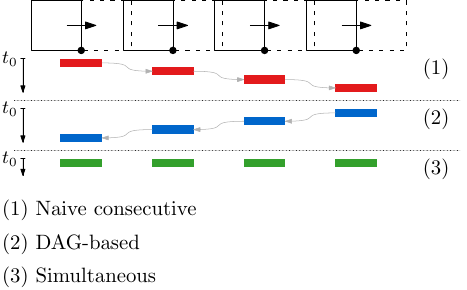}
	\caption{Comparison of possible movement orderings with respect to \GoalOverlapas\space and \GoalDuration.}
	\label{fig:naive_dag_rma}
\end{figure}

\cref{fig:naive_dag_rma} shows three timelines of different transition styles, (1) a naive consecutive transition, (2) a DAG-based transition, and (3) a simultaneous transition. 
All transition styles start at some time $t_0$ and the order of the movements of the labels for (1) and (2) is indicated with gray arrows.
While (1) produces four overlaps and takes four units of time, (2) and (3) produce no overlaps, and (3) only takes a single unit of time. This shows that it is sometimes unnecessary to perform the movements consecutively to minimize overlaps.
In this section, we investigate both how simultaneous movements influence the number of overlaps, and the complexity of minimizing overlaps.

\subsection{An Upper Bound on the Number of Overlaps}
We start our investigation by analyzing the number of overlaps a simultaneous transition can have in the worst case.

\begin{theorem}
	\label{thm:rma}
	In \Transition{R M A} at most $6n$ overlaps can occur, where $n$ is the number of labels that must be moved, and all movements are performed at unit speed.
\end{theorem}
\begin{proof}
	Let $\sigma=1$ denote the side length of a label.
	To show that the total number of overlaps is at most $6n$, we model the overlaps in a graph and consider the neighborhood of individual vertices.
	Let $G$ be a conflict-graph where each vertex $v_i$ corresponds to a label $l_i$.
	If two labels $l_i$ and $l_j$ overlap during the transition, we create an edge ($v_i$, $v_j$), i.e., each edge corresponds to an overlap.
	Observe that each edge is adjacent to at least one moving label since two stationary labels cannot overlap.
	We proceed by evaluating the maximum possible degree of a moving label $l$ and restrict ourselves to a $\sigma$-wide border around the bounding box of the movement area of $l$, that represents the area other labels must touch (before the transition) to overlap with $l$.
	We call this area the \emph{overlapping region} of $l$ and it is illustrated in \cref{fig:overlapping-region}.
	Labels not intersecting the overlapping region of $l$ by construction cannot overlap with $l$.
	We proceed by considering the two possible types of movements for $l$.
	
	\begin{figure}
		\centering
		\includegraphics{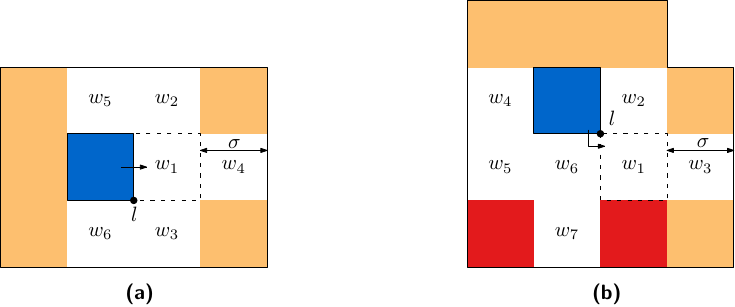}
		\caption{Overlapping regions of the blue label~$l$ depending on its movement type. \textbf{\textsf{(a)}} Non-diagonal movement. \textbf{\textsf{(b)}} Diagonal movement.}
		\label{fig:overlapping-region}
	\end{figure}
	
	\proofsubparagraph{Non-Diagonal Movement of $\boldsymbol{l}$.}
	For a label~$l$ that performs a non-diagonal movement, the overlapping region is illustrated in \cref{fig:overlapping-region}a. The light-orange area in the overlapping region indicates that the start position of a label overlapping $l$ cannot lie solely in this area. If a labels starts in the area behind $l$, then such a label would never overlap with $l$, since labels move simultaneously. The start position of labels overlapping $l$ can neither overlap only the $\sigma \times \sigma$ tiles diagonally adjacent to the end position of $l$, as the end position of those labels would overlap the end position of $l$, for any movement that allows the labels to overlap~$l$.
	
	\begin{figure}[b]
		\centering
		\includegraphics{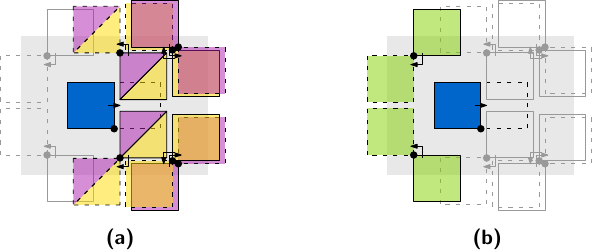}
		\caption{Labels overlapping with the blue label~$l$ that are located on the white tiles. The overlapping region of~$l$ is indicated in gray. \textbf{\textsf{(a)}} White tiles $w_1$ -- $w_4$. \textbf{\textsf{(b)}} White tiles $w_5$ -- $w_6$.}
		\label{fig:rma-non-diagonal-overlaps-right-left}
	\end{figure}
	
	Next consider the remaining (white) area in the overlapping region, and see \cref{fig:rma-non-diagonal-overlaps-right-left} for our construction. Consider first the end position~$w_1$ of $l$ and the three $\sigma \times \sigma$ tiles $w_2,w_3$ and $w_4$ adjacent to it. The total height of $w_1,w_2$ and $w_3$ combined is $3\sigma$, and hence the start positions of at most four labels can be stacked vertically to overlap this area (see the labels with the color \ProofLabelColor{labelcolorpurple} in \cref{fig:rma-non-diagonal-overlaps-right-left}a). Similarly, $w_1$ and $w_4$ have a combined width of $2\sigma$ and height $\sigma$. Since the end position of $l$ is adjacent to the start position of $l$ we can put at most two labels horizontally next to each other in this area, while keeping \Labeling{1} overlap free. However, as the height is $\sigma$ we can stack at most two layers of such labels vertically (the \ProofLabelColor{labelcoloryellow} labels in \cref{fig:rma-non-diagonal-overlaps-right-left}a). As a result, $w_1,w_2,w_3$ and $w_4$ can together overlap with at most six start positions of other labels. Each label results in at most one overlap, and there is a movement direction for each label that achieves such an overlap, as shown in \cref{fig:rma-non-diagonal-overlaps-right-left}a.
	
	Now consider the $\sigma \times \sigma$ tiles~$w_5$ and~$w_6$ above and below the start position of~$l$, respectively. We can place two labels, the ones colored \ProofLabelColor{labelcolorgreen} in \cref{fig:rma-non-diagonal-overlaps-right-left}b, such that their start positions overlap either of~$w_5$ and~$w_6$. For example, for $w_5$ such labels can move diagonally down-left, to overlap $l$. In this case, it is impossible for a label overlapping $w_6$ to both overlap $l$ and have an overlap-free end position. Conversely, we can place one label on~$w_5$ and~$w_6$ each and allow them both to move towards $l$, while ensuring overlap-free end positions (see \cref{fig:rma-non-diagonal-overlaps-right-left}b). Observe that it is impossible to place two labels on both~$w_5$ and~$w_6$ in the latter case, as the vertical positioning that ensures overlap-free end positions of the labels, requires the labels to start farther from $l$. Those labels have to move a vertical distance of at least $\sigma/2$ to reach $l$, and hence also require a horizontal overlap of at least $\sigma/2$ with the start position of $l$, as $l$ will have moved $\sigma/2$ rightwards before the other labels reach the start position of $l$.
	Thus, at most eight labels can overlap with $l$, and consequently the degree of the corresponding vertex is bounded by eight.
	See \cref{fig:rma-overlaps}a for the complete situation.

	\proofsubparagraph{Diagonal Movement of $\boldsymbol{l}$.}
	For diagonal movements, we consider w.l.o.g. the case where $l$ performs a diagonal movement from top-left to bottom-right through the bottom-left corner. The overlapping region enlarges, as shown in \cref{fig:overlapping-region}b.
	We can again eliminate the light-orange areas, as they mark areas that the start position of other labels cannot overlap exclusively, if they should overlap with $l$. As before, these areas are located behind the start position of $l$, and diagonally adjacent to the end position of $l$. We now repeat the process of filling the remaining (white) $\sigma \times \sigma$ tiles, $w_1$ to $w_7$, with start positions for labels that can overlap with $l$ during movement.
	See \cref{fig:rma-diagonal-overlaps-right-left,fig:rma-diagonal-overlaps-middle} for our construction.
	
	We can stack at most three start positions of labels vertically while overlapping~$w_1$ and~$w_2$, as their total height is $2\sigma$. The bottom two start positions overlap with both~$w_1$ and~$w_3$. The total width of these is also $2\sigma$ and hence we can place more start positions for labels overlapping~$w_3$, one in each row. To overlap with $l$, all these labels, which are colored \ProofLabelColor{labelcolorgreen} in \cref{fig:rma-diagonal-overlaps-right-left}a, should start their movement leftwards.
	
	\begin{figure}
		\centering
		\includegraphics{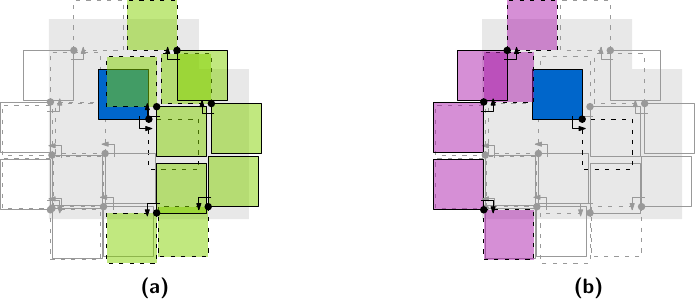}
		\caption{Labels overlapping with the blue label~$l$ that are located on the white tiles. The overlapping region of~$l$ is indicated in gray. \textbf{\textsf{(a)}} White tiles $w_1$ -- $w_3$. \textbf{\textsf{(b)}} White tiles $w_4$ -- $w_5$.}
		\label{fig:rma-diagonal-overlaps-right-left}
	\end{figure}
	
	Next, we can vertically stack at most three start positions to overlap~$w_4$ and~$w_5$ (see the \ProofLabelColor{labelcolorpurple} colored labels in \cref{fig:rma-diagonal-overlaps-right-left}b). If the labels starting in these positions move rightwards first, they also overlap with $l$. This leaves us with~$w_6$ and~$w_7$ below the start position of $l$. Since the start position of $l$ is located directly above these tiles, we can vertically stack at most two start positions of labels overlapping these tiles.
	
	\begin{figure}
		\centering
		\includegraphics{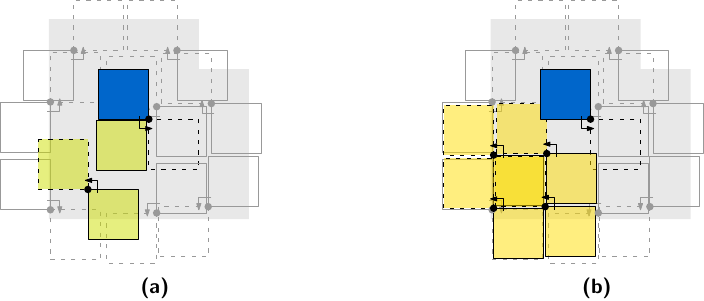}
		\caption{Labels overlapping with the blue label~$l$ that are located on the white tiles $w_6$ -- $w_7$ using different types of labels. The overlapping region of~$l$ is indicated in gray. \textbf{\textsf{(a)}} Using a stationary label at $w_6$. \textbf{\textsf{(b)}} Using a non-stationary labels.}
		\label{fig:rma-diagonal-overlaps-middle}
	\end{figure}
	
	We have two options for~$w_6$ and~$w_7$: either we place a stationary label overlapping~$w_6$, and one moving label overlapping~$w_7$ (see the \ProofLabelColor{labelcolorlime} labels in \cref{fig:rma-diagonal-overlaps-middle}a), side by side, or we try to fit two rows of two start positions for moving labels (see the \ProofLabelColor{labelcoloryellow} labels in \cref{fig:rma-diagonal-overlaps-middle}b). In the former case, we could try to place two moving labels, which first move upwards to overlap $l$. After moving upwards, they must move left, as moving rightwards would result in their end positions overlapping with the end position of $l$. Even if the two labels move to the left, the rightmost of the two labels overlapping~$w_7$ will have its end position overlap the stationary label, thus only a single label can overlap with~$w_7$ in this case, as illustrated in \cref{fig:rma-diagonal-overlaps-middle}a. This case results in a total of at most nine overlapping moving labels, and one stationary label overlapping $l$. The stationary label can be beneficial, as it increases the number of overlaps, without increasing the number~$n$ of moving labels.
	
	In the latter case, where we place two rows of two labels overlapping~$w_6$ and~$w_7$, the labels in the bottom row have to move upwards to overlap $l$. As can be seen in \cref{fig:rma-diagonal-overlaps-middle}b, there is a movement direction for all labels in this case, that results in overlap-free end positions. We now have a total of twelve moving labels that can overlap $l$.
	
	What remains is to consider the red $\sigma \times \sigma$ tiles (see \cref{fig:rma-diagonal-overlaps-middle}b). A label cannot start in the right tile, as it would need to move up- and leftwards to overlap $l$, but this results in overlapping the end position of $l$. Placing a label in the left red tile requires it to first move right and then up, in order to overlap $l$, as the other way around will never result in an overlap. However, the start or end position of this label will overlap with the previously considered labels in~$w_6$ and~$w_7$, and those labels result in multiple overlaps. Thus start positions of moving labels should instead overlap the white area to maximize overlaps.
	
	\proofsubparagraph{Deriving an Upper Bound.} To find an upper bound on the number of overlaps, consider the subgraph $G[V_M]$ induced by the set~$V_M$ of vertices that represent moving labels.
	The degree of each vertex in $G[V_M]$ is bounded by nine, in case the stationary label is present below the start position of $l$, or by twelve, otherwise. Hence we have a degree sum of at most $9n$ or $12n$, respectively, since $\vert V_M \vert = n$.
	By the handshaking lemma, we derive that we have at most $\lceil 4\frac{1}{2}n\rceil$ or $6n$ edges in $G[V_M]$, respectively.
	If we consider the original graph $G$, we can observe that it differs from $G[V_M]$ in terms of edges only by the edges that are incident to a moving label and a stationary label.
	However, as we have seen in one case of the above proof, and due to \cref{lem:one-point}, a moving label can overlap with at most one stationary label.
	Since each of the edges in $E(G) \setminus E(G[V_M])$ is incident to exactly one vertex that represents a moving label, and we have $n$ of such labels, $\vert E(G) \setminus E(G[V_M]) \vert$ is bounded by $n$ and consequently, we have at most $\lceil 5\frac{1}{2}n\rceil$ overlaps with the stationary label present. However, in case there is no stationary label, we arrive at a higher bound of at most $6n$ overlaps.
	The upper bound is tight, as can be seen in \cref{fig:rma-overlaps}b.
\end{proof}

\begin{figure}
	\centering
	\includegraphics{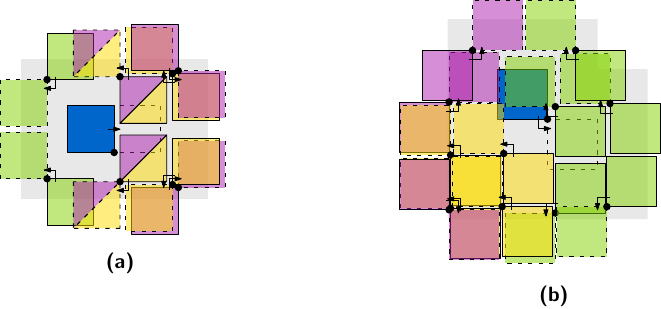}
	\caption{Labels overlapping the blue label~$l$. The overlapping region of~$l$ is indicated in gray. \textbf{\textsf{(a)}} If~$l$ performs a non-diagonal movement. \textbf{\textsf{(b)}} If~$l$ performs a diagonal movement.}
	\label{fig:rma-overlaps}
\end{figure}

\subsection{Complexity of Computing Simultaneous Transitions}
In this section, we show that it is \NPComplete to minimize the number of overlaps in a \emph{weighted} \Transition{RMA}-transition by choosing the direction of diagonal movements.

\begin{definition}[Weighted Transition]
	Let \Transition{\Sigma} be a transition, where \TransitionStyle{\Sigma} denotes an arbitrary transition style of additions, movements, and removals, and let $w$ be a weight function that assigns to each label $l\in L$ a non-negative weight $w(l) \in \mathbb{R}_0^+$.
	A weighted transition \WeightedTransition{\Sigma}{w} performs \Transition{\Sigma}, but when two labels $l_i$ and $l_j$ overlap, a penalty of weight $w(l_i)\cdot w(l_j)$ is introduced. The \emph{total penalty}~$W$ is equal to the sum of penalty weights. 
\end{definition}

\begin{problem}
	\label{prob:rma-minimize-overlaps}
	Given a weighted transition \WeightedTransition{R M A}{w} and $k \in \mathbb R_0^+$, can we assign a movement direction to each diagonal movement such that the total penalty~$W$ is at most $k$?
\end{problem}

\begin{theorem}
	\label{thm:minimizing-rma-np-hard}
	It is \NPComplete to decide whether $W$ is at most $k$ for \WeightedTransition{R M A}{w}.
\end{theorem}
\begin{proof}
	We first argue containment in \NP. Given a movement direction for each label, we can check each pair of labels and compute the penalty of the overlaps. If the total penalty~$W$ is at most $k$ then we have a solution to \cref{prob:rma-minimize-overlaps}, otherwise the given directions do not lead to a valid solution. Hence \cref{prob:rma-minimize-overlaps} is contained in \NP.
	
	To prove \NP-hardness, we build a reduction from \PlanarMonotoneMaxTwoSAT onto \cref{prob:rma-minimize-overlaps}.
	Arguments that \PlanarMonotoneMaxTwoSAT is \NPHard can be found in Buchin, Polishchuk, Sedov, and Voronov~\cite[Theorem 7]{Buchin.2019}.
	The hardness reduction for \PlanarMonotoneMaxTwoSAT uses a reduction from \PlanarMaxTwoSAT, which was shown to be \NPHard in Guibas et al.~\cite{Guibas.1993} by using a planarity-preserving reduction from \PlanarThreeSAT. The incidence graph of \PlanarThreeSAT instances used in the reduction, can be embedded in rectangular grids of polynomial-size~\cite[Section 3]{Cabello.2003}. We therefore also consider only an orthogonal embedding for the incidence graph in our reduction.\footnote{An illustration of such a rectilinear layout and its use in a hardness-proof for a map labeling-related problem can be found in Knuth and Raghunathan~\cite{Knuth.1992}.}
	Furthermore, the incidence graph of a \PlanarMonotoneMaxTwoSAT instance can be embedded in the plane in a way such that all variables are placed on a straight line, all occurrences of the positive variables are on one side and all negative occurrences on the other side of that line~\cite{Berg.2010}.
	Our reduction is inspired by previous proofs on the $\mathsf{NP}$-completeness of point labeling problems in the four-position model from Formann and Wagner~\cite{Formann.1991} and Marks and Shieber~\cite{Marks.1991}.
	
	Given a monotone planar 2-SAT formula $F$ in CNF with $n$ clauses, $C_1$ to $C_n$.\footnote{
		In the literature $n$ usually denotes the number of variables in a \textsc{SAT}-instance; not clauses.
		However, asymptotically this is irrelevant for our reduction:
		A \PlanarMonotoneMaxTwoSAT-instance with $n$ clauses has up to $2n$ distinct variables, thus our reduction is still polynomial in time and size.
	}
	We construct an instance of \WeightedTransition{R M A}{w}, consisting of only labels that require diagonal movement. Each diagonal moving label must choose a movement direction such that there the total penalty $W$ is at most $k$, if and only if we can satisfy at least $n - k$ clauses of $F$.
	
	For each clause, we construct a \emph{clause-gadget} displayed in \cref{fig:gadgets}a.
	The clause-gadget consists of two labels with a weight of one performing a diagonal movement.
	Each label represents one of the variables of the clause, and assigning \emph{``true''} to that variable means that the corresponding label takes the outer path while assigning \emph{``false''} corresponds to taking the respective inwards movement.
	Satisfying the clause can be achieved by assigning ``true'' to at least one of the variables, with the consequence that the potential overlap inside the gadget can be avoided.
	Thus, when the clause cannot be satisfied, both variables are ``false'', and an overlap with penalty one occurs in the corresponding clause-gadget.
	What remains is to ensure that the variable assignment is consistent among all clauses. 
	We use the structure of the planar incidence graph of $F$ to achieve this.
	
	\begin{figure}
		\centering
		\includegraphics{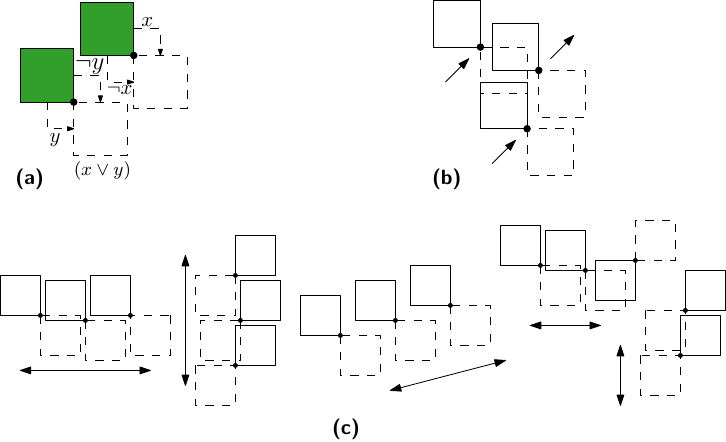}
		\caption{Gadgets in our reduction. White and green labels have weight $n + 1$ and one, respectively. \textbf{\textsf{(a)}} A clause-gadget for the clause $(x \lor y)$. \textbf{\textsf{(b)}} A merge-gadget. \textbf{\textsf{(c)}} Various transportation-gadgets for different use cases.}
		\label{fig:gadgets}
	\end{figure}
	
	First, we introduce \emph{transportation-gadgets}, displayed in \cref{fig:gadgets}c.
	Using these gadgets, we can transport the selection of a literal in a clause-gadget over the plane, i.e., along the edges of our incidence graph.
	By construction, a transportation-gadget ensures that, once one label at the end of the gadget chooses the movement inwards (towards the gadget), the gadget becomes \emph{active} and passes this selection on to the other end.
	This means that, for example, if in the leftmost transportation-gadget in \cref{fig:gadgets}c, the leftmost label makes its diagonal movement clockwise, all labels will make their movement clockwise.
	Similarly, all labels move counterclockwise, if the rightmost label in this gadget performs its movement counterclockwise.
	As we can further see in \cref{fig:gadgets}c, the transportation-gadgets are also able to perform ninety-degree bends while preserving the movement direction.
	This transportation-gadget will be used along the edges of the incidence graph, starting at the vertices representing the clauses, i.e., the clause-gadgets, and leading towards the vertices representing the variables, i.e., the \emph{variable-gadgets}, which we will introduce soon. The number of labels used in a transportation-gadget is proportional to the length of the edge it represents.
	Furthermore, we can slightly adapt the direction of a transportation-gadget by changing the relative offset of the moving labels, which will be useful inside variable-gadgets.
	Finally, each label $l$ of a transportation-gadget has a weight of $w(l) = n + 1$. Thus, any overlap inside a transportation-gadget will result in a penalty larger than $n$ and instantly invalidate the solution, as a valid solution has a total penalty~$W$ of at most $k\leq n$.
	
	To introduce the variable-gadget, we first need a \emph{merge-gadget}, which can merge two independent transportation-gadgets into one transportation-gadget.
	\cref{fig:gadgets}b shows the building blocks of the gadget: it has two incoming ends and one outgoing end, to which transportation-gadgets can attach.
	Once at least one of the transportation-gadgets attached to the incoming ends becomes active, then the transportation gadget at the outgoing end is forced to be active as well, to avoid overlaps inside the gadget.
	Since the labels have a weight of $n + 1$, an overlap inside this gadget would result in a total penalty of at least $(n + 1)^2$, and hence the outgoing transportation-gadget it forced to be active, if one of the incoming transportation-gadgets is active. 
	
	To ensure that we do not encounter invalid variable assignments, we use variable-gadgets.
	A variable-gadget is a combination of merge-gadgets and a variation of the clause-gadget, and will be placed at each vertex of the incidence graph representing a variable in $F$.
	
	A variable-gadget for variable $x_i$ has many incoming transportation-gadgets for occurrences of $x_i$ and $\lnot x_i$ in clauses. These transportation gadgets are pairwise merged until we have only a single transportation-gadget for $x_i$ and one for $\lnot x_i$. Since all clauses with only positive variables are placed on one side of the variable-gadget and the clauses with only negative variables are on the other side, it is guaranteed that we can merge only transportation-gadgets that transport the same information, i.e., we never merge transportation-gadgets for $x_i$ and $\lnot x_i$.
	The two remaining transportation-gadgets, one at each side, are connected using a clause-gadget with adapted weights: the involved labels all have a weight of $n + 1$.
	This setup ensures that if at least one transportation-gadget per side was active, the two transportation-gadgets that arise from the merging process are active as well, and in the middle of the variable-gadget we get an overlap with a penalty of $(n + 1)^2$. 
	
	The number of labels used in the variable-gadget for~$x_i$ is proportional to the number of occurrences of $x_i$ and $\lnot x_i$.
	Observe that merge-gadgets form two rooted binary trees connected at their roots by the clause-gadget. The occurrences of $x_i$ and $\lnot x_i$ form the leaves of the two trees (see \cref{fig:completereduction-example}).
	Hence, if there are $o$ and $o'$ occurrences of $x_i$ and $\lnot x_i$, respectively, then the variable-gadget for~$x_i$ consists of $o + o' - 2$ merge-gadgets, one clause gadget, and few labels to connect these gadgets.
	
	The total number of labels in the reduced instance is proportional to the size of the embedded incidence graph. We require two labels per clause, transportation-gadgets use a number of labels proportional to the length of the edge they represent, and each variable-gadgets uses a number of labels proportional to the number of (positive and negative) occurrences of that variable. The total number of labels is hence polynomial.
	
	We will now prove the correctness of the reduction. First assume that $n-k\geq 0$ clauses of $F$ can be satisfied with a consistent variable assignment.
	We should encounter a total penalty~$W$ of at most $k$ in the reduced instance, which can originate only from overlaps of weight one in $k$ clause-gadgets.
	Consider the clause-gadgets, and assign movement directions corresponding to the variable assignment in $F$: outwards if the occurrence of a variable is ``true'' and inwards if the occurrence is ``false''. Only in the $k$ clause gadgets that are not satisfied, do we get an overlap with penalty one, since all clause-gadget labels have weight one. The total penalty so far is $k$. The transportation for the outwards moving labels in clause-gadgets will be active, and will be transported, through merge-gadgets to the middle of each variable-gadget. Since the variable assignment is consistent, each variable~$x_i$ will be set to ``true'' for only its positive or its negative occurrences. Thus the movement directions chosen in the clause-gadgets will result in no additional overlaps in the variable gadgets. The total penalty~$W$ is hence $k\leq n$.
	
	Finally, assume that in the reduced instance there is a movement direction for each label, such that $W=k\leq n$.
	This means that there can be overlaps only in the clause gadgets, as overlaps outside of the clause-gadgets would result in a total penalty~$W$ larger than $n$. We show that there is an assignment of the variables in $F$ such that exactly those clauses are satisfied, which correspond to clause-gadgets without overlaps.
	
	The variation of the clause-gadget inside each variable-gadget can now be in one of two states:
	either one of two labels, corresponding to the positive and negative occurrences of the variable, chose the outwards movement direction, or both labels chose the outwards direction. In the former case, if the side of the positive occurrences chose an inwards movement, we set the corresponding variable to ``true'', otherwise we set the variable to ``false''. The clause-gadgets connected to the side that chose the inwards movement will be able to chose an outwards movement for the corresponding variable in the clause-gadget, and this will not result in any overlaps in transportation- or merge-gadgets. Thus either the clauses with all positive or all negative occurrences of such a variable will be satisfied using this variable assignment, mimicking the clauses without overlaps. 
	
	We can now observe that all $k$ clause-gadgets without overlaps are already accounted for, and hence $n-k$ clauses in $F$ are already satisfied: in the latter case for the variable-gadget, both positive and negative occurrences have an outward movement direction in the variable-gadget, and hence an inwards movement direction in the corresponding clause-gadgets. This means that the remaining variable-gadgets are not connected to clause-gadgets in which they provide the outgoing movement direction that prevents an overlap. These variable-gadgets are in a ``don't care''-state, and we can choose either assignment of the variable without changing the already satisfied clauses. Thus only the $k$ clauses corresponding to clause-gadgets with overlaps are not satisfied, and hence $n-k\geq 0$ clauses in $F$ are satisfied.
	
	Combining the above arguments, we can conclude that we have reduced the decision version of \PlanarMonotoneMaxTwoSAT onto \cref{prob:rma-minimize-overlaps}, and therefore, it is \NPComplete to decide whether $W$ is at most $k$ for \WeightedTransition{R M A}{w}.
\end{proof}

\begin{figure}
	\centering
	\includegraphics{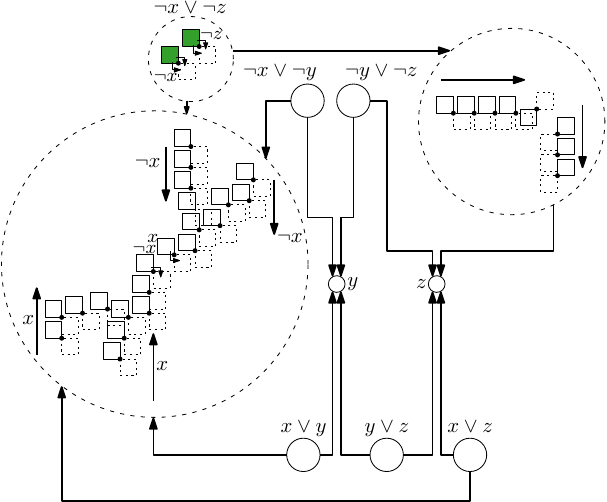}
	\caption{Reduced instance for the formula $F = (\lnot x \lor \lnot z) \land (\lnot x \lor \lnot y) \land (\lnot y \lor \lnot z) \land (x \lor y) \land (y \lor z) \land (x \lor z)$. 
		The weight of white and green labels is $n + 1$ and one, respectively.}
	\label{fig:completereduction-example}
\end{figure}

\cref{fig:completereduction-example} shows an example of the complete setup.
There we reduce the formula $F = (\lnot x \lor \lnot z) \land (\lnot x \lor \lnot y) \land (\lnot y \lor \lnot z) \land (x \lor y) \land (y \lor z) \land (x \lor z)$ onto our problem.
Circles with solid borders indicate the individual parts of the formula, while the dashed circle shows part of a transportation gadget. The big empty circles represent the clauses and small empty circles represent the variables.
The arrows outside circles represent the transportation-gadgets that connect the individual parts according to the direction of the arrows. 
As there exists no variable assignment which satisfies $F$, we cannot achieve $W = k = 0$ but must encounter at least one overlap (and hence $W \geq 1$).
This overlap occurs, for example, in the clause gadget for $(\lnot x \lor \lnot z)$ to achieve $W=1$.
Note that if we would try to resolve this overlap by, for instance, setting $x$ to true and false at the same time, an overlap with penalty $(n+1)^2=49$ would occur, for example, in the variable-gadget for the variable $x$.

\section{Case Study: Transition Quality in Real-World Scenarios}
\label{sec:case-study}
To complement our theoretical investigations in the previous sections, we implemented a prototype to analyze how the transition styles perform in a practical setting.
Our prototype labels spatiotemporal data (\cref{sec:labeling-model}) on an interactive map (\cref{subsec:case-study-implementation}) at a given time of interest.
Interacting with the prototype by means of panning or zooming the map or altering the time of interest, the shown data points change, which we visualize by applying our transition styles on a set of axis-aligned labels.
In our case study, we consider different (user-generated) real-world data (\cref{subsec:case-study-datasets}) and consider different interaction schemes with the map.
For each of them, we measure the computation time of the labeling (\cref{subsec:case-study-results-backend}) as well as the number of overlaps and the time to compute and perform a transition (\cref{subsec:case-study-results-frontend}).
The source code and videos of the different transition styles are online publicly available~\cite{OSF}.

\subsection{Dynamic Labeling Model}
\label{sec:labeling-model}
The data we use have spatiotemporal properties: each point $p$ has a location and a time associated to it.
Starting from this point in time, we consider $p$ (and its associated data) \emph{relevant} for a given, data-dependent, time span.
The relevant points at a particular \emph{time of interest} will form the set $P$ of points that we want to label.
Changes to the time of interest (dynamically) alter the set~$P$ of relevant points through additions and removals.
Furthermore, in our implementation not all points in $P$ will be in view at all times:
For example, when the user zooms in on a particular part of the map, some points will be outside the view port. In such cases, we label only the subset $S \subseteq P$ of points that are inside the view port.

\subsection{Implementation Details}
\label{subsec:case-study-implementation}

\begin{figure}
	\centering
	\includegraphics[width=.95\linewidth]{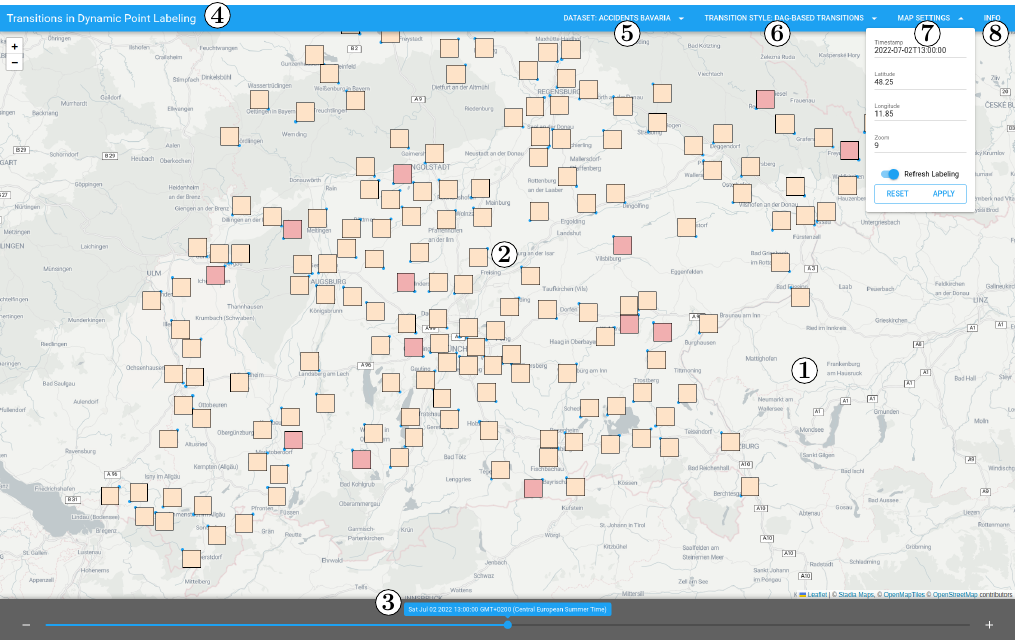}
	\caption{Screenshot of the prototype.}
	\label{fig:prototype-screenshot}
\end{figure}

The prototype computes a labeling in the four-position model of the relevant points~$S$ that are in the view port.
\cref{fig:prototype-screenshot} shows a screenshot of our prototype.
The main view area (\circled{1}), in the center of the screen, shows a map and a labeling overlay.
In the following, we state the location of a point and the coordinates of the map center using their respective \emph{longitude} and \emph{latitude} values assuming \emph{equirectangular projection}.
In general, the experiments could be conducted with any map projection and we suspect that different projections could lead to different but comparable results.
The map contains blue dots (\circled{2}) indicating the locations of the points in $S$ that are labeled.
Below the map, the time slider (\circled{3}) shows the currently selected time of interest, which must be between the earliest and latest timestamp associated to a point in the dataset.
The top bar (\circled{4}) allows the user to change the underlying dataset (\circled{5}), transition style (\circled{6}), alter the state of the map (\circled{7}) using the map settings, or display additional information about the prototype and the used data (\circled{8}).

The user can interact with the prototype by means of \emph{panning} and \emph{zooming} the map, as well as changing the time of interest by using the \emph{time slider}.
Panning is done by dragging the map using the mouse, while zooming is controlled using either the mouse wheel or the zoom indicators in the upper-left corner of the map.
Our implementation supports discrete, step-wise zooming, where each (integer) step is called a \emph{zoom level}.
The time of interest is changed by dragging the indicator in the time slider, or using the $+$ and $-$ buttons on either side of the slider that increase or decrease the time of interest by $1\%$ of the overall possible time span.
Panning and zooming change the subset $S\subseteq P$ of points in view, while changes to the time of interest alter the relevant points~$P$.

\subsubsection{Computing a Labeling}
For a given subset $S\subseteq P$ of points that are both relevant and in view, we compute a labeling as follows.
We create a conflict graph of labels for $S$ and use a simple greedy approximation algorithm for a \textsc{Maximal Independent Set} $I$ in this graph: iteratively add a minimum-degree vertex to $I$, that shares no edge with vertices in $I$.

When the user now interacts with the map, we again perform the same algorithm to find a new labeling, but we use two simple heuristics to improve the stability of the labeling.
The first heuristic is based on desideratum D1 from Been et al.~\cite{Been.2006}.
Among other things, it proposes that the same labels should remain visible when zooming.
To achieve this, we remove all neighbors of the previously shown labels in the conflict graph, to ensure they are picked again.

The second heuristic attempts to prevent unnecessary changes in the labeling:
Let $I_1$ be the subset of labeled points that remained relevant and visible after panning/zooming/time change, and let $I_2$ be the newly computed set of points to be labeled.
If $I_2$ is less than $2\%$ larger than $I_1$, then we simply keep the labeling of $I_1$ instead of swapping to a labeling of $I_2$.

When the subset $S\subseteq P$ of relevant points in view is changed, through panning or zooming, or when $P$ is dynamically altered by changes in the time of interest, our prototype will trigger a transition.

\subsubsection{Transition Styles}
Let \Labeling{1}, \Labeling{2}\ be the computed labelings before and after the change, respectively.
Our prototype supports naive, DAG-based, and simultaneous transitions for \Transition{}.
In each of these transition styles, a movement of a single label has a duration of one second and a removal or addition of a label is animated as a fade-out/fade-in effect with a duration of half a second.
A diagonal movement is split up into two non-diagonal movements with a duration of one second each, starting with the horizontal movement.
Recall that 400 milliseconds turned out to be the critical duration after which moving objects on a map can be recognized~\cite{KFN.Dmp.2016}.
\subparagraph*{Naive Transitions.}
Movements in this transition are performed consecutively in arbitrary order.
Their order is based on the order in which we recognize the need for a movement.

\subparagraph*{DAG-based Transitions.}
The movements in DAG-based transitions are also performed consecutively, though ordered according to a topological ordering of the movement graph~$G_{\mathcal{M}}$.
If $G_{\mathcal{M}}$ contains cycles, we remove the vertex with the lowest in-degree and first move the label of the removed vertex.
Additionally, in order to reduce the duration times of the transition, we perform unrelated movements in $G_{\mathcal{M}}$ simultaneously.

\subparagraph*{Simultaneous Transitions.}
The movements in this transition are all performed simultaneously, immediately after the removals.
The direction of diagonal movement is not optimized for minimum overlaps. Instead, we move horizontally first, to create a more uniform transition.

\subsubsection{Implementation} 
The prototype is a three-tier-architecture, consisting of a graph-\emph{database} (Neo4j) storing the data points together with their potential label candidates and overlaps, an \emph{application tier} (Java Play Framework) computing the (new) labeling, and the \emph{presentation tier} (Vue.js, Leaflet, and GreenSock) with which the user can interact and which visualizes the transitions.

For this case study we used a standard off-the-shelf laptop with an Intel\textregistered Core\texttrademark\space i7-1355U with 32 Gigabyte RAM running Ubuntu 22.04.4 LTS (64 Bit).
The prototype was implemented using Neo4j community edition version 5.21.0, Java openjdk 17.0.11, Java Play Framework version 3.0.4, node version 20.15.0, Vue.js version 3.4.31, Leaflet version 1.9.4, and GreenSock version 3.12.5. 
The presentation tier was accessed using Firefox version 128.0 (64-bit) on an external 24'' monitor with a resolution of 1920×1200px.

\subsection{Datasets and Interaction Scenarios}
\label{subsec:case-study-datasets}
We use three datasets each based on real-world data obtained from accident statistics, Tweets\footnote{At the time of creating the dataset the social-media platform was known as \emph{Twitter} and not \emph{X}. To avoid confusion, we decided to still name the dataset \emph{Twitter}.}, or weather reports.
For each dataset, we select a range of zoom levels on which we display the labels together with corresponding label sizes such that we can display an appropriate amount of information inside the labels and still ensure that a reasonable number of labels can fit on the screen.
Although our labels are thus not necessarily squares of side length one, all labels of the same dataset and for a given zoom level and view port have the same size.
Hence, our theoretical results derived in \cref{sec:consecutive,sec:rma} carry over to this model by scaling horizontally and/or vertically.

In the following, we describe each dataset in more detail and summarize the properties of the datasets in \cref{tab:datasets-properties}.

\begin{table}
	\centering
	\caption{Properties of the three datasets.}
	{
		\begin{tabular}{l c r r r r} 
						\toprule
						& & & & \multicolumn{2}{c}{Zoom level}\\
						\cmidrule{5-6} Name & Labels & \# Points & Relevant for & Min & Max\\ 
						\midrule
						\DatasetAccidents (Bavaria) & Squares & 8252 & 1 week & 8 & 20\\
						\DatasetAccidents (Berlin) & Squares & 1569 & 1 week & 9 & 20\\
						\DatasetTwitter & Rectangles & 25000 & 3 hours & 7 & 16\\
						\DatasetWeather & Squares & 1171 & 2 days / 2 weeks & 11 & 20\\
						\bottomrule
		\end{tabular}
	}
	\label{tab:datasets-properties}
\end{table}

\subparagraph*{Accidents.}
The dataset \DatasetAccidents is based on road accidents reported by the German Office for Statistics for the year 2022~\cite{SFL2024,ITNW2024}.
Each accident corresponds to one data point, and we use the information provided by the Office for Statistics to determine the time and place the accident happened.
We use axis-aligned square labels to show an accident on the map.
The label shows a colored icon indicating the type and severity of the accident.
An accident is relevant for one week, and we created two variants of this dataset.
The first variant contains accidents from Bavaria with at least one major injured or killed person and the second variant contains accidents in Berlin where at least one pedestrian was involved.
Labels are visible between zoom levels eight and twenty, and between nine and twenty for the former and latter variant, respectively.

\subparagraph*{Twitter.}
For the dataset \DatasetTwitter, we queried 100,000 geotagged tweets related to the COVID-19 pandemic during the month of May 2021; see \cref{tab:query-keywords}. After filtering and cleaning this dataset, 99,982 usable tweets remained, out of which we randomly selected 25,000.

\begin{table}
	\centering
	\caption{The \texttt{Keywords} and \texttt{\#Hashtags} we used to query the tweets.}
	\ttfamily
	\small
	{\resizebox{\linewidth}{!}{
		\begin{tabular}{l l l l l l l}
						\toprule
						corona & \#corona & covid & \#covid & covid19 & \#covid19 & covid-19\\
						vaccine & \#vaccine & quarantine & \#quarantine & lockdown & \#lockdown & moderna \\
						outbreak & \#outbreak & immune &
						\#immune & immunity & \#immunity & biontech\\
						who & \#who & desease & \#desease & masks & \#masks & pfizer \\
						pandemic & \#pandemic & mutation & \#mutation & ffp2 & \#ffp2 \\
						\#StaySave & \#StayAtHome & \multicolumn{2}{l}{\#FlattenTheCurve} & astrazeneca & \multicolumn{2}{l}{johnson \& johnson}\\
						\bottomrule
		\end{tabular}
	}}
	\label{tab:query-keywords}
\end{table}

A tweet will be represented as a point~$p\in P$, and its label as an axis-aligned rectangle.
For the spatial and temporal property of the point, we use the metadata attached to the tweet.
If the location assigned to a tweet is not a point but a (rectangular) area on the map, we choose an arbitrary location inside this area to prevent artificial cluster creation.
A tweet is relevant for three hours and its label visible between zoom levels seven and sixteen.

\subparagraph*{Weather.}
The third dataset, \DatasetWeather, is based on observations of local weather and soil conditions reported by users to GeoSphere Austria's online service wettermelden.at~\cite{GeoSphereAustria2024}.
Recall that we consider this service a perfect real world applications that would benefit from implementing the transitions proposed in this paper.

The dataset contains the reports within Vienna metropolitan area from July and August 2023.
Each report corresponds to one point in the dataset and we use  the same labels as wettermelden.at.
However, as they are round, we inscribe them into an axis-aligned square to fit our theoretical model.
We consider a report relevant for two days or weeks, depending whether it reports the condition of the weather or the soil, respectively.
Labels are visible between zoom levels eleven and twenty.

\subsubsection{Interaction Scenarios}
In our case study, we use our prototype to simulate eighteen interaction settings in nine scenarios across the three datasets.
The different scenarios we use in our case study are described in \cref{tab:case-study-states}.
In the first setting, we interact with the prototype by applying the following scenario-dependent sequence of interactions.

\begin{table}
	\centering
	\caption{The different scenario states of the case study.}
	{\resizebox{\linewidth}{!}{
		\begin{tabular}{l l c r r r r} 
						\toprule
						& & & \multicolumn{2}{c}{Map center} & &\\
						\cmidrule{4-5} Dataset & Scenario Name & Interaction & Longitude & Latitude & Zoom level & Time of interest\\ 
						\midrule
						\multirow{3}{*}{\DatasetAccidents} & Bavaria (Munich) & (a) & 11.85 & 48.25 & 9 & 2022-07-02T13:00:00\\
						& Bavaria (Nuremberg) & (b) & 11.05 & 49.53 & 11 & 2022-10-09T12:00:00\\
						& Berlin & (c) & 13.60 & 52.50 & 11 & 2022-03-18T15:00:00\\
						\midrule
						\multirow{4}{*}{\DatasetTwitter} & Italy & (a) & 14.45 & 41.30 & 7 & 2021-05-29T13:20:00\\
						& Leeds & (b) & -1.60 & 53.44 & 7 & 2021-05-29T13:00:00\\
						& New Delhi & (a) & 71.18 & 30.20 & 7 & 2021-05-29T08:30:00\\
						& S\~ao Paulo & (c) & -45.00 & -20.65 & 7 & 2021-05-29T02:30:00\\
						\midrule
						\multirow{2}{*}{\DatasetWeather} & June & (a) & 16.38 & 48.21 & 12 & 2023-06-17T08:00:00\\
						& July & (b) & 16.38 & 48.21 & 13 & 2023-07-11T16:00:00\\
						\bottomrule
		\end{tabular}
	}}
	\label{tab:case-study-states}
\end{table}

\begin{description}
	\item[Interaction (a):]
	(1) Increase the time of interest by thirty minutes for the dataset  \DatasetTwitter~/ three days for the dataset \DatasetAccidents~/~one day for the dataset \DatasetWeather, (2) zoom in by one zoom level with the help of the zooming indicators, (3) increase the latitude of the map's center by $0.28$ for the datasets \DatasetTwitter and \DatasetAccidents~/~$0.05$ for the dataset \DatasetWeather using the map settings, and (4) increase the time of interest with the $+$ button next to the time slider.
	\item[Interaction (b):] Interaction~(a) in reverse order:
	(1) Increase the time of interest with the $+$ button next to the time slider, (2) increase the latitude of the map's center by $0.28$ for the datasets \DatasetTwitter and \DatasetAccidents~/~$0.05$ for the dataset \DatasetWeather using the map settings, (3) zoom in by one zoom level with the help of the zooming indicators, and (4) increase the time of interest by thirty minutes for the dataset \DatasetTwitter~/~three days for the dataset \DatasetAccidents~/~one day for the dataset \DatasetWeather.
	\item[Interaction (c):]
	(1) Zoom in by one zoom level with the help of the zooming indicators, (2) \emph{decrease} the time of interest with the $-$ button next to the time slider, (3) \emph{decrease} the \emph{longitude} of the map's center by $1.7$ for the dataset \DatasetTwitter~/~$0.1$ for the dataset \DatasetAccidents\footnote{There is no scenario where we use the dataset \DatasetWeather and perform the interaction sequence~(c).} using the map settings, and (4) increase the time of interest by \emph{twenty} minutes for the dataset \DatasetTwitter~/~two days for the dataset \DatasetAccidents.
	
\end{description}

In the second setting, we continuously increase the time of interest by repeatedly clicking fifteen times the $+$ button next to the time slider.

\subsection{Measuring Labeling Computation Time}
\label{subsec:case-study-results-backend}
We measured the (wall-clock) time spend to compute a labeling and summarize the results in \cref{tab:case-study-results-backend}.
\cref{tab:case-study-results-backend} furthermore shows the running time of the most important components.
There, we can see that for many scenarios the majority of the time  (between 70\% and 95\%) is spent on querying the database.
This can be seen as the bottleneck of our labeling algorithm since the remaining parts run together in less than 85ms.
Even though efficiently computing a (good) labeling is not part of our investigation, we want to underline with these results that our naive labeling algorithm is capable of providing a decent result in under two seconds.
Thus, we think that in terms of efficiency, it is very well suited to simulate a real-world application.

\begin{sidewaystable}
	\centering
	\caption{Running times in milliseconds to compute a labeling averaged over all interactions with the prototype in a given scenario and setting. QT = Query Time Database, CG CT = Conflict Graph Creation Time, \textsc{MIS} CT = \textsc{Maximal Independent Set} Computation Time.}
	{
			\begin{tabular}{clcrrrrrrrrr}
								\toprule
								& & & \multicolumn{2}{c}{Conflict Graph (CG)} & \multicolumn{3}{c}{Changes in the Labeling} & \multicolumn{4}{c}{Running times}\\
								\cmidrule{4-12}
								Dataset & Scenario & Setting & \#Vertices & \#Edges & \#Added & \#Removed & \#Moved & QT & CG CT & \textsc{MIS} CT & Total \\
								\midrule
								\multirow{6}{*}{\rotatebox{90}{\DatasetAccidents}} & \multirow{2}{*}{Bavaria (Munich)} & 1 & 443.00 & 1495.75 & 35.25 & 56.50 & 24.25 & 168.49 & 1.21 & 6.18 & 451.09 \\
								& & 2 & 596.53 & 2384.53 & 74.40 & 77.80 & 29.80 & 145.76 & 1.68 & 8.45 & 461.21 \\
								& \multirow{2}{*}{Bavaria (Nuremberg)} & 1 & 42.00 & 74.00 & 4.75 & 8.75 & 0.50 & 97.05 & 0.13 & 0.31 & 139.68 \\
								& & 2 & 70.67 & 140.80 & 9.00 & 9.33 & 1.60 & 95.49 & 0.18 & 0.44 & 155.23 \\
								& \multirow{2}{*}{Berlin} & 1 & 97.00 & 177.25 & 4.50 & 5.25 & 1.75 & 55.63 & 0.31 & 0.83 & 133.40 \\
								& & 2 & 100.00 & 318.87 & 13.07 & 13.27 & 4.80 & 65.76 & 0.34 & 0.83 & 175.81 \\
								\midrule
								\multirow{8}{*}{\rotatebox{90}{\DatasetTwitter}} &\multirow{2}{*}{Italy} & 1 & 18.00 & 37.50 & 0.25 & 2.50 & 0.00 & 1144.74 & 0.07 & 0.37 & 1200.63 \\
								& & 2 & 46.40 & 103.00 & 2.33 & 2.33 & 1.20 & 1212.58 & 0.09 & 0.48 & 1271.65 \\
								& \multirow{2}{*}{Leeds} & 1 & 595.00 & 15331.50 & 18.00 & 21.25 & 6.75 & 1000.30 & 8.11 & 11.42 & 1250.74 \\
								& & 2 & 925.07 & 29461.40 & 28.80 & 28.80 & 10.00 & 1204.41 & 14.61 & 19.35 & 1442.73 \\
								& \multirow{2}{*}{New Delhi} & 1 & 188.00 & 6438.00 & 2.00 & 8.75 & 2.00 & 973.71 & 2.27 & 4.73 & 1047.59 \\
								& & 2 & 359.47 & 11081.87 & 10.40 & 10.33 & 4.47 & 1023.28 & 4.50 & 7.41 & 1098.45 \\
								& \multirow{2}{*}{S\~ao Paulo} & 1 & 72.67 & 462.50 & 1.42 & 3.00 & 1.25 & 853.78 & 0.31 & 1.20 & 917.59 \\
								& & 2 & 51.73 & 345.93 & 2.47 & 3.33 & 0.80 & 830.82 & 0.18 & 0.30 & 856.50 \\
								\midrule
								\multirow{4}{*}{\rotatebox{90}{\DatasetWeather}} & \multirow{2}{*}{June} & 1 & 961.00 & 49835.75 & 6.50 & 15.50 & 4.00 & 813.02 & 26.83 & 33.84 & 1184.57 \\
								& & 2 & 982.13 & 48273.47 & 7.73 & 8.40 & 4.53 & 738.35 & 26.35 & 26.13 & 1014.66 \\
								& \multirow{2}{*}{July} & 1 & 603.00 & 33531.00 & 1.75 & 13.75 & 0.75 & 1023.93 & 18.17 & 16.84 & 1441.30 \\
								& & 2 & 979.47 & 80133.00 & 5.27 & 5.80 & 1.53 & 1054.53 & 41.56 & 39.77 & 1410.85 \\
								\bottomrule
			\end{tabular}
	}
	\label{tab:case-study-results-backend}
\end{sidewaystable}

\subsection{Measuring Transition Time and Number of Overlaps}
\label{subsec:case-study-results-frontend}
The main focus of this case study is to evaluate the performance of our transition styles in a practical application.
Therefore, we measured for each setting and each transition style the time required to compute and perform the transition and the number of overlaps that occurred during that transition.

\begin{sidewaystable}
	\centering
	\caption{Maximum ($\max$) and average ($\varnothing$) computation time in milliseconds for each transition style in each setting. Best average computation times per row are highlighted bold. Recall that we perform in DAG-based transitions unrelated movements simultaneously.}
	{\begin{tabular}{clcrrrrrr}
						\toprule
						&& & \multicolumn{2}{c}{Naive transitions} & \multicolumn{2}{c}{DAG-based transitions} & \multicolumn{2}{c}{Simultaneous transitions}\\
						\cmidrule{4-9}
						Dataset & Scenario & Setting & \multicolumn{1}{c}{$\max$} & \multicolumn{1}{c}{$\varnothing$} & \multicolumn{1}{c}{$\max$} & \multicolumn{1}{c}{$\varnothing$} & \multicolumn{1}{c}{$\max$} & \multicolumn{1}{c}{$\varnothing$} \\
						\midrule
						\multirow{6}{*}{\rotatebox{90}{\DatasetAccidents}}&\multirow{2}{*}{Bavaria (Munich)}& 1 & 21 & 12.00 & 10 & 7.00 & 10 & \textbf{6.00} \\
						& & 2 & 12 & 7.13 & 9 & 6.20 & 10 & \textbf{5.67} \\
						&\multirow{2}{*}{Bavaria (Nuremberg)}&1 & 4 & 2.75 & 3 & 2.00 & 3 & \textbf{1.75}\\
						&& 2 & 5 & 2.67 & 8 & 3.07 & 6& \textbf{2.33} \\
						&\multirow{2}{*}{Berlin}&1  & 5 & \textbf{2.50} & 6 & 4.00 & 5 & \textbf{2.50} \\
						& & 2 & 5 & \textbf{2.73} & 16 & 4.07 & 8 & 3.27 \\
						\midrule
						\multirow{8}{*}{\rotatebox{90}{\DatasetTwitter}} &\multirow{2}{*}{Italy}&1 & 2 & \textbf{1.25} & 7 & 2.75 & 5 & 2.25 \\
						&&2 & 4 & \textbf{1.80} & 4 & 2.13 & 9 & 2.00 \\
						&\multirow{2}{*}{Leeds}&1 & 6 & 4.00 & 10 & 5.00 & 5 & \textbf{2.75} \\
						&&2 & 7 & 3.80 & 7 & \textbf{3.53} & 8 & 4.13 \\
						&\multirow{2}{*}{New Delhi}&1 & 9 & 4.00 & 6 & \textbf{3.50} & 12 & 4.75 \\
						&&2 & 11 & 3.40 & 8 & \textbf{3.00} & 9 & 3.13 \\
						&\multirow{2}{*}{S\~ao Paulo}&1 & 5 & 3.00 & 3 & 1.75 & 1 & \textbf{0.50} \\
						&&2 & 8 & 1.67 & 4 & 1.80 & 2 & \textbf{0.93} \\
						\midrule
						\multirow{4}{*}{\rotatebox{90}{\DatasetWeather}}&\multirow{2}{*}{June}&1 & 11 & 4.25 & 5 & \textbf{2.50} & 6 & \textbf{2.50} \\
						&&2 & 8 & 2.13 & 4 & \textbf{1.73} & 6 & 1.87 \\
						&\multirow{2}{*}{July}&1 & 13 & 4.75 & 4 & \textbf{2.00} & 7 & 2.50 \\
						&&2 & 4 & \textbf{1.20} & 4 & 1.33 & 5 & 2.00 \\
						\bottomrule
		\end{tabular}
	}
	\label{tab:case-study-results-frontend-times-computation}
\end{sidewaystable}

We give an overview over the time measurements in \cref{tab:case-study-results-frontend-times-computation,tab:case-study-results-frontend-times-duration}.

From \cref{tab:case-study-results-frontend-times-computation}, we can deduce that, starting one movement after the other, i.e., performing a naive transition, often requires more computation time than firing multiple movements at once.
However, in spite of the measurable difference in computation times, computing a transition took on average at most twelve milliseconds, which is negligible.

\begin{sidewaystable}
	\centering
	\caption{Maximum ($\max$) and average ($\varnothing$) duration time in milliseconds for each transition style in each setting. Best average duration times per row are highlighted bold.}
	{\begin{tabular}{clcrrrrrr}
						\toprule
						&& & \multicolumn{2}{c}{Naive transitions} & \multicolumn{2}{c}{DAG-based transitions} & \multicolumn{2}{c}{Simultaneous transitions}\\
						\cmidrule{4-9}
						Dataset & Scenario & Setting & \multicolumn{1}{c}{$\max$} & \multicolumn{1}{c}{$\varnothing$} & \multicolumn{1}{c}{$\max$} & \multicolumn{1}{c}{$\varnothing$} & \multicolumn{1}{c}{$\max$} & \multicolumn{1}{c}{$\varnothing$}\\
						\midrule
						\multirow{6}{*}{\rotatebox{90}{\DatasetAccidents}}&\multirow{2}{*}{Bavaria (Munich)}& 1  & 46802 & 23814.50  & 4684 & 3193.25  & 2736 & \textbf{1802.75} \\
						& & 2  & 40949 & 26301.40  & 5006 & 3730.73  & 2525  & \textbf{2077.13} \\
						&\multirow{2}{*}{Bavaria (Nuremberg)}&1  & 2835 & 1174.25 & 1893 & 936.50 & 1863 & \textbf{928.75} \\
						&& 2  & 5496 & 2501.33 & 3745 & 1755.47 & 2810 & \textbf{1694.33} \\
						&\multirow{2}{*}{Berlin}&1   & 5576 & 2383.50  & 2789 & 1458.50  & 2807 & \textbf{1246.00} \\
						& & 2  & 12931 & 7100.27  & 4757 & 3225.47  & 2875 & \textbf{2612.73} \\
						\midrule
						\multirow{8}{*}{\rotatebox{90}{\DatasetTwitter}} &\multirow{2}{*}{Italy}&1  & 990 & 481.75  & 988 & \textbf{478.00}  & 993 & 480.75 \\
						&&2  & 4837 & 2306.80  & 3830 & 1911.00  & 2899 & \textbf{1780.87} \\
						&\multirow{2}{*}{Leeds}&1  & 15561 & 6846.25  & 3989 & 2140.50 & 2472 & \textbf{1420.50} \\
						&&2  & 12091 & 9156.80  & 5081 & 3409.33 & 2400 & \textbf{2212.53} \\
						&\multirow{2}{*}{New Delhi}&1  & 10446 & 3633.75  & 4724 & 1994.75 & 2878 & \textbf{1536.75} \\
						&&2  & 10360 & 5930.00  & 4511 & 2773.27  & 2650 & \textbf{2374.47} \\
						&\multirow{2}{*}{S\~ao Paulo}&1  & 6624 & 2131.50  & 4739 & 1659.00  & 2849 & \textbf{1184.25} \\
						&&2  & 4730 & 2014.47 & 3793 & 1894.40 & 2940 & \textbf{1760.80} \\
						\midrule
						\multirow{4}{*}{\rotatebox{90}{\DatasetWeather}}&\multirow{2}{*}{June}&1  & 10417 & 3913.25  & 3129 & 1699.75 & 2234 & \textbf{1249.00} \\
						&&2  & 11045 & 4247.00 & 5510 & 2452.73 & 2220 & \textbf{1484.53} \\
						&\multirow{2}{*}{July}&1  & 2750 & 1249.25 & 1847 & 1021.75 & 1838 & \textbf{1019.25} \\
						&&2  & 4078 & 1757.60  & 2774 & 1283.13 & 2137 & \textbf{1172.20} \\
						\bottomrule
			
		\end{tabular}
	}
	
	\label{tab:case-study-results-frontend-times-duration}
\end{sidewaystable}

This is in contrast to the duration times for the different transition styles, where the difference is also perceivable.
In particular, \cref{tab:case-study-results-frontend-times-duration} reveals that simultaneous transitions are the fastest, and the naive transitions the slowest.
Noteworthy is that DAG-based transitions are always faster than naive transitions and in particular the duration of the former style is most of the times even close to the simultaneous transitions.
This suggests that performing movements simultaneously that are unrelated in~$G_{\mathcal{M}}$, as we do for DAG-based transitions, has a significant benefit with respect to the duration of a transition.
Note that for settings with less than two movements, all three transitions have the same duration.	Furthermore, for transitions with only movements that are unrelated in~$G_{\mathcal{M}}$, simultaneous and DAG-based transitions take the same time.
Any difference reported in \cref{tab:case-study-results-frontend-times-duration} is due to measurement inaccuracies and negligible.

We now turn our focus to the number of overlaps that we could observe for the different transition styles.
To that end, we considered two labels as overlapping if their respective bounding boxes overlap in at least one percent of their area.
This ensures that we only count overlaps that are clearly perceived as such by a user.

\begin{sidewaystable}
	\centering
	\caption{Maximum ($\max$), average ($\varnothing$), and total ($\Sigma$) number of overlaps for each transition style in each setting.
		We also report on the total number of moving labels ($\Sigma$ \#M) and highlight in bold the transition style with the fewest average overlaps per row.}
	{%
			\begin{tabular}{clcrrrrrrrrrr}
								\toprule
								&& && \multicolumn{3}{c}{Naive transitions} & \multicolumn{3}{c}{DAG-based transitions} & \multicolumn{3}{c}{Simultaneous transitions}\\
								\cmidrule{5-13}
								Dataset & Scenario & Setting & $\sum$ \# M & $\max$ & $\varnothing$ & $\sum$ & $\max$ & $\varnothing$ & $\sum$ & $\max$ & $\varnothing$ & $\sum$\\
								\midrule
								\multirow{6}{*}{\rotatebox{90}{\DatasetAccidents}}&\multirow{2}{*}{Bavaria (Munich)}& 1 & 97 & 7 & 2.25 & 9 & 5 & \textbf{1.75} & 7 & 6 & 2.50 & 10 \\
								&& 2 & 447 & 5 & 2.47 & 37 & 3 & \textbf{0.80} & 12 & 6 & 1.67 & 25 \\
								&\multirow{2}{*}{Bavaria (Nuremberg)}& 1 & 2 & 0 & \textbf{0.00} & 0 &  0 & \textbf{0.00} & 0& 0 & \textbf{0.00} & 0 \\
								&& 2 & 24 & 0 & \textbf{0.00} & 0 & 0 & \textbf{0.00} & 0 & 1 & 0.07 & 1 \\
								&\multirow{2}{*}{Berlin}& 1 & 7 & 1 & 0.25 & 1 & 0 & \textbf{0.00} & 0 & 0 & \textbf{0.00} & 0 \\
								&& 2 & 72 & 2 & 0.60 & 9 & 1 & \textbf{0.27} & 4 & 2 & 0.40 & 6 \\
								\midrule
								\multirow{8}{*}{\rotatebox{90}{\DatasetTwitter}} &\multirow{2}{*}{Italy}& 1 & 0 & 0 & \textbf{0.00} & 0  & 0 & \textbf{0.00} & 0 & 0 & \textbf{0.00} & 0 \\
								&& 2 & 18 & 1 & \textbf{0.13} & 2 & 1 & \textbf{0.13} & 2 & 1 & \textbf{0.13} & 2 \\
								&\multirow{2}{*}{Leeds}& 1 & 27 & 3 & 1.50 & 6 & 3 & \textbf{0.75} & 3 & 3 & 1.00 & 4 \\
								&& 2 & 150 & 5 & 2.27 & 34 & 4 & \textbf{1.07} & 16 & 4 & 1.40 & 21 \\
								&\multirow{2}{*}{New Delhi}& 1 & 8 & 0 & \textbf{0.00} & 0 & 0 & \textbf{0.00} & 0 & 0 & \textbf{0.00} & 0 \\
								&& 1 & 67 & 4 & 0.53 & 8 & 2 & \textbf{0.27} & 4 & 2 & 0.33 & 5 \\
								&\multirow{2}{*}{S\~ao Paulo}& 1 & 5 & 1 & 0.25 & 1 & 0 & \textbf{0.00} & 0 &  2 & 0.50 & 2 \\
								&& 2 & 12 & 2 & 0.20 & 3 & 1 & \textbf{0.13} & 2 & 2 & 0.20 & 3 \\
								\midrule
								\multirow{4}{*}{\rotatebox{90}{\DatasetWeather}}&\multirow{2}{*}{June}& 1 & 16 & 2 & \textbf{0.50} & 2 & 2 & \textbf{0.50} & 2 & 2 & \textbf{0.50} & 2 \\
								&& 2 & 68 & 4 & 1.53 & 23 & 2 & \textbf{0.67} & 10 & 3 & 0.93 & 14 \\
								&\multirow{2}{*}{July}& 1 & 3 & 0 & \textbf{0.00} & 0 & 0 & \textbf{0.00} & 0 & 0 & \textbf{0.00} & 0 \\
								&& 2 & 23 & 1 & \textbf{0.07} & 1 & 1 & \textbf{0.07} & 1 & 1 & 0.20 & 3 \\
								\bottomrule
			\end{tabular}
	}
	\label{tab:case-study-results-frontend-overlaps}
\end{sidewaystable}

From \cref{tab:case-study-results-frontend-overlaps}, we can see that the DAG-based transitions produce in every setting the least number of overlaps, sometimes even half as many overlaps in total as the other styles.
Recall our second interaction setting where increased fifteen times the time of interest by clicking on the $+$ button next to the time slider.
This gives us for every combination of dataset and scenario fifteen transitions that arise from the same interaction pattern.
In \cref{fig:cumulative-overlaps}, we plot for each of these combinations the cumulative number of overlaps for each transition style.
They make immanent that the DAG-based transitions consistently outperforms the other two transition styles in terms of overlaps across the whole interaction span.
Furthermore, \cref{fig:cumulative-overlaps} leads us to the following two observations.
Firstly, although the worst-case upper bound for the number of overlaps for a simultaneous transition is much closer to that for a naive transition, the measured number of overlaps is often closer to that for a DAG-based transition.
Secondly, especially in areas with a dense labeling, DAG-based transitions tends to produce significantly fewer overlaps; see for example Munich-setting.

\begin{figure}
	\centering
	\includegraphics[width=\linewidth]{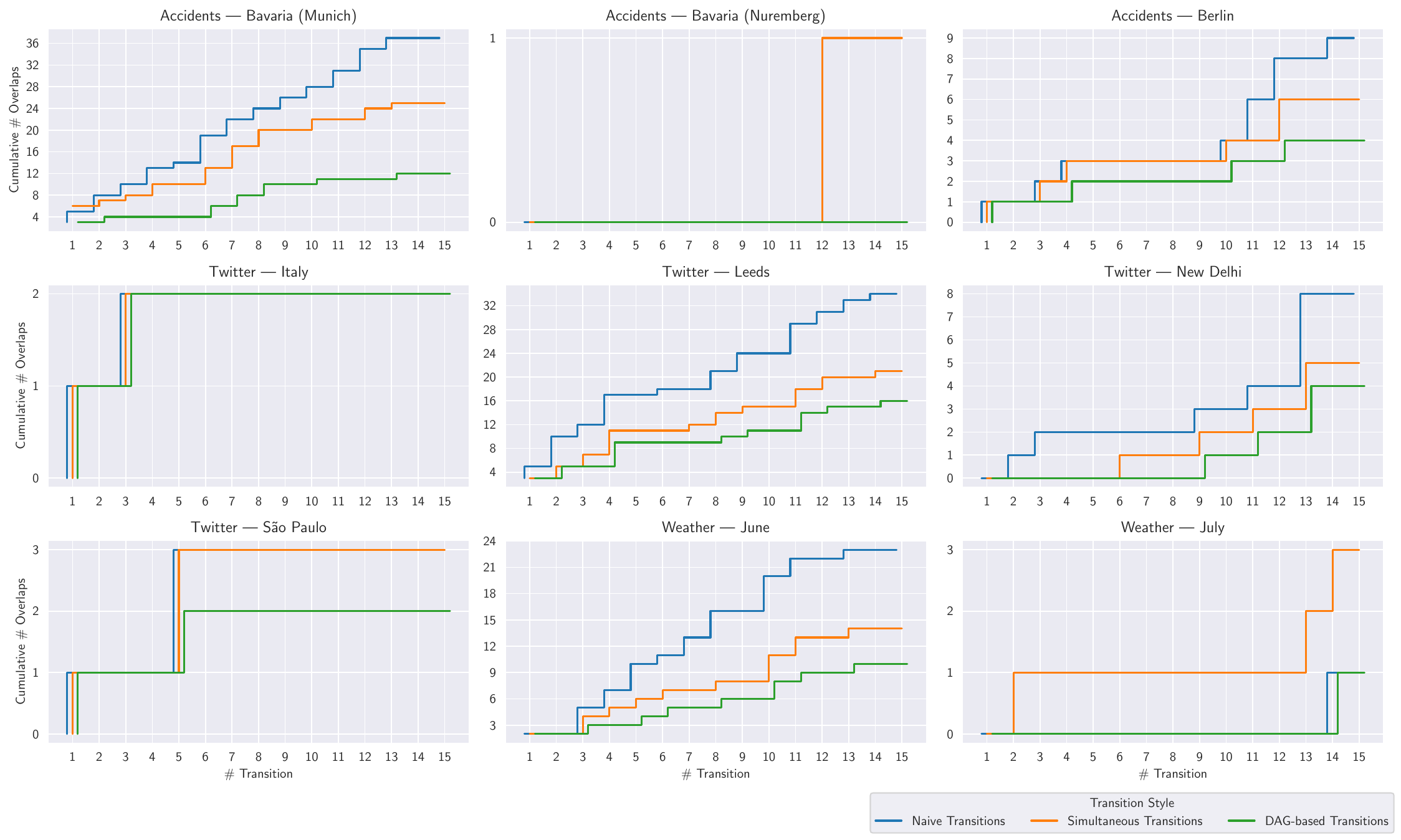}
	\caption{The measured cumulative number of overlaps in the second interaction setting for each combination of dataset and scenario and each transition style.}
	\label{fig:cumulative-overlaps}
\end{figure}

\subsection{Discussion and Limitations}
\label{sec:subsec:case-study-discussion}
Our case study shows that DAG-based transitions find a good compromise between the number of overlaps (\GoalOverlapas; up to half as many overlaps compared to other transition styles) and the duration of the transitions (\GoalDuration; on average around 40\% faster than naive transitions).
However, simultaneous transitions are more appealing, if one favours faster transitions over the number of overlaps (on average around 19\% faster than DAG-based transitions).
Hence, we see a clear trade-off between the number of overlaps and transition duration, similarly to previous work~\cite{BergG13}.
The case study also indicates that DAG-based or simultaneous transitions should be preferred over naive transitions in terms of transition duration and number of overlaps.
This underlines the role of naive transitions primarily serving as a simple baseline approach on which some of the more advanced transition styles such as DAG-based transitions are built and against which they can be compared.
While the case study underlines the practical merit of transitions, we must also acknowledge some limitations.
First, we only use a naive, greedy algorithm to compute the individual labelings.
As our case study focuses on the transition between two arbitrary labelings, this fact can be neglected in our setting.
However, we must acknowledge that the individual labelings have a great effect on the readability and quality of the map.
Furthermore, the perceived stability of the labeling also depends on the computation time of the new labelings and the number of label changes, two factors that are beyond the scope of this case study.
Finally, while this case study confirmed that the proposed transition styles meet their theoretical expectations in terms of the transition goals, i.e., duration and number of label overlaps, it should be verified in a user study if these are the right criteria to optimize for.

\section{Conclusion}
\label{sec:conclusion}
In this paper we performed a first investigation into the number of overlaps produced by transitions on labelings of points, and started by proving tight upper bounds for various transition styles.
In addition, we implemented the transition styles in a prototype and performed a case study that revealed the need for sophisticated transition styles that find a good compromise between the number of overlaps and the duration of a transition.
We see this paper as a first step towards understanding such transitions in point labeling.
We have already addressed in the previous section some directions for future work targeted to our case study.
Also in terms of theoretical questions, there are numerous open questions for future work, such as:
\begin{itemize}
	\item Should we develop new transition styles or improve the existing ones? Can we utilize more structured movement, like performing all movements in the same direction simultaneously?
	Can we develop map feature-aware transition styles that, for example, avoid occluding important map features during movement?
	\item Is it sensible to try to formalize more perception-oriented desiderata for transitions, such as the symmetry of transitions or the traceability of labels?
	\item Is choosing label directions in simultaneous transitions still \NP-hard with unit weights?
	\item Can we compute a \emph{stable} labeling \Labeling{2}, that minimizes the number of moving labels?
	\item How do transitions perform in comparison that do not separate additions, movements, and removals with respect to our optimization goals but also in cognitive studies assessing the usability and utility of the different approaches?
\end{itemize}

\bibliography{references}

\newpage
\appendix
\section{Arbitrary Rectangle Labels}\label{app:arbitrary-rectangles}
While in the main text we considered only square labels, point labelings often use arbitrary rectangles. If we allow our labels to be arbitrary rectangles, then it is no longer guaranteed that only one (stationary) label can overlap with the area traversed by the moving label.
If we assume that the label with the largest side width (some $\sigma_{x_{\max}}$) must perform a diagonal movement, we can align $\frac{\sigma_{x_{\max}}}{\sigma_{x_{\min}}}$ stationary labels with a width of $\sigma_{x_{\min}}$, the smallest label width in our map, on the horizontal edge of the traversed area. 
As one label can always extend out of that traversed area without resulting in an invalid labeling \Labeling{1} or \Labeling{2}, we can put up to $\lceil \frac{\sigma_{x_{\max}}}{\sigma_{x_{\min}}} \rceil$ labels next to each other in the $x$-direction.
The same holds for the $y$-axis, with maximum and minimum height $\sigma_{y_{\max}}$ and $\sigma_{y_{\min}}$, respectively.
As we can put labels anywhere inside the traverse area, we can place up to $\lceil \frac{\sigma_{x_{\max}}}{\sigma_{x_{\min}}} \rceil \cdot \lceil \frac{\sigma_{y_{\max}}}{\sigma_{y_{\min}}} \rceil$ labels intersecting that area and therefore $\lceil \frac{\sigma_{x_{\max}}}{\sigma_{x_{\min}}} \rceil \cdot \lceil \frac{\sigma_{y_{\max}}}{\sigma_{y_{\min}}} \rceil$ overlaps occur during the movement (see \cref{fig:corollay-arbitrary-rectangles}).
This results in the following corollary, as an extension of Lemma 2.1.

\begin{figure}
	\centering
	\includegraphics{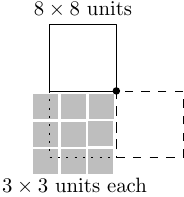}
	\caption{The $8 \times 8$ label wants to perform a counterclockwise diagonal movement and overlaps with nine stationary $3 \times 3$ labels.
		The overlapping region is dotted, while the end position of the moving label is indicated with the dashed rectangle.}
	\label{fig:corollay-arbitrary-rectangles}
\end{figure}

\begin{corollary}
	\label{cor:lem-1-arbitrary-rectangles}
	When the labels are arbitrary rectangles with side length $\sigma_{x_i}$ and $\sigma_{y_i}$, with $1 \leq i \leq n$ and $n$ denotes the number of labels, performing transition \Transition{R M_i A} for a label of a point $p_i$ can result in at most ${\lceil \frac{\sigma_{x_{\max}}}{\sigma_{x_{\min}}} \rceil \cdot \lceil \frac{\sigma_{y_{\max}}}{\sigma_{y_{\min}}} \rceil}$ overlaps given that the end position of $p_i$ is free, where $\sigma_{x_{\min}} = \min\{\sigma_{x_i}\mid 1 \leq i \leq n\}$ and $\sigma_{x_{\max}}$, $\sigma_{y_{\min}}$ and $\sigma_{y_{\max}}$ are defined similarly.
\end{corollary}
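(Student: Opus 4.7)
\medskip

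\noindent\textbf{Proof plan.} My plan is to mimic the proof of Lemma~\ref{lem:one-point} but replace the ``only one stationary label can intervene'' argument by a two-dimensional packing bound that accounts for the possible size disparity among rectangles. As before, because the removals are performed before~$M_i$ and the additions afterwards, both the start and the end position of~$l_i$ are free during the movement, so no stationary label can overlap either of them. Consequently, every overlap must occur in the \emph{intermediate traversed area}, i.e., the rectangular region swept out by $l_i$ that lies strictly between its start and end positions. I would restrict attention to this area and count how many stationary rectangles can intersect it while keeping both \Labeling{1} and \Labeling{2} overlap-free.

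The next step is the packing argument in each axis separately. Assume w.l.o.g.\ that $l_i$ has the maximum width $\sigma_{x_{\max}}$ and performs a diagonal movement, so the traversed area has horizontal extent $\sigma_{x_{\max}}$ and vertical extent $\sigma_{y_{\max}}$. Any other label intersecting this area must do so without overlapping the (already fixed) start and end positions of~$l_i$. I would first argue that in the horizontal direction we can place at most $\lceil \sigma_{x_{\max}}/\sigma_{x_{\min}} \rceil$ pairwise non-overlapping labels whose footprint intersects the traversed area: the smallest permissible width is $\sigma_{x_{\min}}$, and as noted in the appendix text preceding the statement, one label at the boundary may extend outside the area, which is what produces the ceiling rather than a floor. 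An entirely analogous argument bounds the number of labels stackable in the vertical direction by $\lceil \sigma_{y_{\max}}/\sigma_{y_{\min}} \rceil$.

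Finally, I would combine the two one-dimensional bounds into a grid-like packing bound: arranging the stationary rectangles in a (possibly irregular) two-dimensional layout inside the traversed area yields at most $\lceil \sigma_{x_{\max}}/\sigma_{x_{\min}} \rceil \cdot \lceil \sigma_{y_{\max}}/\sigma_{y_{\min}} \rceil$ of them simultaneously, and each such rectangle contributes at most one overlap with~$l_i$ during the sweep. Tightness follows from the explicit construction indicated in Figure~\ref{fig:corollay-arbitrary-rectangles}, where the grid of smallest labels is realized.

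The main obstacle I anticipate is making the two-dimensional packing bound rigorous rather than merely a ``product of axis counts''. In particular, one must verify that labels placed to maximize one axis do not prevent labels being placed in another row/column because of the free-start/free-end constraint, and that the ceiling (from the boundary label allowed to extend outside the traversed area) can indeed be achieved in both dimensions simultaneously without violating \Labeling{1} or \Labeling{2}. Handling the boundary cases carefully — especially when $\sigma_{x_{\max}}/\sigma_{x_{\min}}$ or $\sigma_{y_{\max}}/\sigma_{y_{\min}}$ is not an integer — is where I expect the bookkeeping to be subtle, but it should reduce to a straightforward case distinction.
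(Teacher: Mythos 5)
Your proposal follows essentially the same route as the paper: both exploit that removals before and additions after the movement keep the start and end positions of $l_i$ free, and both then bound the number of stationary labels meeting the traversed area by a per-axis packing count of $\lceil \sigma_{x_{\max}}/\sigma_{x_{\min}} \rceil$ and $\lceil \sigma_{y_{\max}}/\sigma_{y_{\min}} \rceil$ (with the ceiling coming from one boundary label being allowed to protrude), multiplied together and matched by the grid construction of Figure~\ref{fig:corollay-arbitrary-rectangles}. The rigor concern you raise about the two-dimensional product bound is fair, but the paper's own argument is no more formal on that point, so your plan is faithful to it.
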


\cref{cor:lem-1-arbitrary-rectangles} shows how upper bounds on the number of overlaps produced by square labels can be extended to the setting of arbitrary rectangles. This introduces only a constant factor, depending on the ratio between the largest and smallest side lengths in each dimension. However, for many transitions adding the constant factor, as suggested by \cref{cor:lem-1-arbitrary-rectangles}, does not yield a tight bound. This stems from the fact that many upper bounds require overlaps with the start or end position of a label $l$, not just the traversed area of $l$. Since those positions are solely occupied at respectively the beginning and the end of the transition, we cannot place $\lceil \frac{\sigma_{x_{\max}}}{\sigma_{x_{\min}}} \rceil \cdot \lceil \frac{\sigma_{y_{\max}}}{\sigma_{y_{\min}}} \rceil$ labels in those positions: many of those labels are unable to move away completely.

\end{document}